\documentclass[lettersize,journal]{IEEEtran}
\usepackage{amsmath,amsfonts}
\usepackage{array}
\usepackage[caption=false,font=normalsize,labelfont=sf,textfont=sf]{subfig}
\usepackage{textcomp}
\usepackage{stfloats}
\usepackage{url}
\usepackage{verbatim}
\usepackage{graphicx}
\usepackage{cite}
\usepackage{xcolor}
\usepackage{caption} % 调整caption居中
\usepackage{placeins} % 强制reference在图后展示
\usepackage{float}
\usepackage{afterpage}
\usepackage{amsmath,amssymb,amsfonts}
\usepackage[ruled,vlined,linesnumbered]{algorithm2e}
\usepackage{booktabs}
\usepackage{array}
\usepackage{multirow}
\usepackage{enumitem}
\usepackage{makecell}  
\usepackage{bbm}
\usepackage{orcidlink}
\usepackage{amsthm}
\newtheorem{assumption}{Assumption}
\newtheorem{theorem}{Theorem}
\newcommand{\E}{\mathbb{E}}
\newcommand{\J}{\mathcal{J}}

\begin{document}
\title{Toward Blockage-Resilient 6G-V2X Connectivity: Semi-Distributed Bandit with Dynamic Arm Set for mmWave HetNets}
%\title{Semi-Distributed Blockage-Aware Bandit for Vehicular User Association in mmWave HetNets for 6G and Beyond}
\author{Weiqi Chi, \IEEEmembership{Graduate Student Member, IEEE,}
Bo Qian, \IEEEmembership{Member, IEEE,}
Hanlin Wu, 
Donghui Li, \\ 
Haibo Zhou, \IEEEmembership{Fellow, IEEE,}
and Manabu Tsukada, \IEEEmembership{Member, IEEE}
%\thanks{Manuscript received XXX; revised XXX; accepted XXX.}%
\thanks{This work was supported in part by the National Natural Science Foundation of China under Grant 62271244, in part by JSPS Grant-in-Aid for Early-Career Scientists under Grant JP25K21195, in part by JST ASPIRE under Grant JPMJAP2325. A preliminary conference version of this work is available as a preprint at arXiv:2606.08118~\cite{chi2026band}.}
\thanks{Weiqi Chi, Bo Qian \emph{(Corresponding author)}, Hanlin Wu, Donghui Li, and Manabu Tsukada are with the Graduate School of Information Science and Technology, The University of Tokyo, Tokyo 113-8657, Japan (e-mail: weiqichi@g.ecc.u-tokyo.ac.jp; boqian@ieee.org; hanlinwu@g.ecc.u-tokyo.ac.jp; li-donghui@g.ecc.u-tokyo.ac.jp; mtsukada@g.ecc.u-tokyo.ac.jp).}
\thanks{Haibo Zhou is with the School of Electronic Science and Engineering, Nanjing University, Nanjing 210023, China (e-mail: haibozhou@nju.edu.cn).}
}
% The paper headers
\markboth{IEEE Transactions on Vehicular Technology}%
{Chi \MakeLowercase{\textit{et al.}}: Semi-Distributed Blockage-Aware Bandit for User Association in mmWave Vehicular HetNets}
\maketitle

\begin{abstract}
The vision for 6G vehicle-to-everything (V2X) communications demands reliable, adaptive connectivity for fully autonomous driving across complex dynamic environments. Millimeter-wave (mmWave) user association (UA) in heterogeneous vehicular networks presents a particularly demanding instance of this problem, where dynamic blockages and rapid channel variations continuously undermine the stationary reward assumptions of traditional multi-armed bandit (MAB) frameworks. This paper proposes a fully distributed blockage-aware non-stationary dynamic bandit algorithm (BAND) and its semi-distributed extension S-BAND for cooperative learning across vehicles. Blockage prediction is incorporated into the change-detection (CD) mechanism to suppress false alarms, while a dynamic base station (BS) set management scheme balances exploration and exploitation across large-scale BS deployments without requiring centralized channel state information (CSI) acquisition or offline training. In S-BAND, vehicles accumulate BS reward estimates as local knowledge and periodically upload them to the macro base station (MBS), which aggregates them into cluster-based central knowledge. A trajectory-aligned knowledge (TAK) region is proposed to capture the spatial correlation of mmWave channel characteristics. A knowledge inheritance fidelity (KIF) metric is introduced to quantify knowledge transfer quality. Simulation results on a realistic urban topology show that BAND and S-BAND achieve \textbf{34.9\%} and \textbf{59.4\%} regret reduction relative to a centralized MAB baseline, with performance gains sustained across blockage rates ranging from 10\% to 50\%. The proposed TAK region consistently outperforms the traditional K-means clustering scheme under both fidelity criteria.
\end{abstract}

\begin{IEEEkeywords}
millimeter-wave, V2X communication, multi-armed bandit, user association, semi-distributed learning
\end{IEEEkeywords}

\section{Introduction}

Intelligent Transportation Systems (ITS) sit at the core of the smart city vision, with vehicle-to-everything (V2X) communications serving as a central enabler by allowing connected autonomous vehicles (CAVs) to exchange real-time information with surrounding vehicles, infrastructure, and network nodes. Supported applications span safety-critical collision avoidance to bandwidth-intensive cooperative perception, each imposing distinct constraints on latency and throughput~\cite{survey2}. Safety messages demand near-perfect reliability under tight latency budgets, while cooperative perception generates data volumes that narrowband technologies such as Dedicated Short-Range Communications are unable to accommodate~\cite{survey1,QB2}. The 6G mobile communication is expected to accelerate these demands further, envisioning AI-native management as one of the foundation enablers for next-generation autonomous driving~\cite{survey6G}, with anticipated ubiquitous coverage raising the performance floor that communication infrastructure must reliably meet.

Millimeter-wave (mmWave) technology offers abundant spectrum in the 30-to-300~GHz range, enabling the multi-gigabit data rates that modern vehicular applications require. IEEE 802.11bd and NR V2X have been standardized to address the throughput and latency limitations of earlier technologies~\cite{survey3}. The 6G roadmap anticipates further extension toward sub-THz bands with AI-native physical layer designs~\cite{survey6G}, yet mmWave remains the most mature tier of the emerging heterogeneous access stack. However, the propagation characteristics of mmWave introduce distinct challenges, as free-space path loss is substantially higher than at sub-6~GHz and signal penetration through buildings and vehicles is severely limited, leaving links highly susceptible to static obstructions and moving blockages. High vehicle speeds compound to this effect, as brief shadowing events can trigger abrupt connectivity loss requiring handover to an alternative base station (BS)~\cite{Li2022mobility}. The combination of mobility-induced channel non-stationarity and dynamic blockages introduces inherent intermittency in the mmWave vehicular link~\cite{survey5}, imposing demanding requirements on any network management strategy aimed at sustaining reliable connectivity.

User association (UA) determines which BS a mobile node connects to at any given time. In mmWave vehicular heterogeneous networks (HetNets), mmWave sub-BS (SBS) are densely deployed alongside conventional macro BS (MBS) to extend spectral capacity~\cite{ua1}. The UA decision directly shapes load distribution, spectrum utilization, and the Quality of Service (QoS) experienced across the HetNets. The vehicular environment substantially increases the difficulty of this problem. Dynamic blockages generated by surrounding traffic introduce abrupt link quality degradations, and large vehicles in particular produce prolonged shadowing that continuously destabilizes active association decisions under congested conditions. The high mobility of vehicles further shortens the validity window of any association decision, giving rise to a highly dynamic and unstable network topology. Conventional UA strategies are poorly matched to this regime. Methods that rely on channel state information (CSI) face the practical difficulty that measurements become outdated before association decisions can be executed~\cite{QB1}, while the cost of continuous re-estimation is prohibitive in a network of this scale and dynamism~\cite{uasurvey2016}. Max-SINR association~\cite{ua2} avoids explicit channel estimation but systematically concentrates traffic on high-power macro BSs. This leaves mmWave small cells underutilized and failing to deliver the spectral gains intended in heterogeneous deployments. Given the need for low-overhead and real-time decisions, multi-armed bandit (MAB) frameworks have attracted interest as an online alternative, where association quality is learned through sequential interaction rather than from pre-collected channel measurements or offline training~\cite{rw_mab1}. The fundamental difficulty is that standard MAB formulations assume stationary reward distributions, a condition that vehicular environments structurally violate due to mobility, blockages, and time-varying interference. This paper addresses the resulting gap, and the specific contributions are as follows.
\begin{enumerate}
\item \textbf{We propose a semi-distributed blockage-aware non-stationary dynamic bandit (S-BAND) framework for mmWave vehicular UA.} S-BAND introduces a semi-distributed architecture in which vehicles execute a local UA algorithm and periodically upload accumulated knowledge to the MBS for central knowledge formation. A knowledge transfer process is designed to reduce cold-start latency for newly arrived vehicles without requiring centralized CSI collection or offline training. S-BAND also provides a blockage-aware change-detection (CD) mechanism and a dynamic BS set management scheme that together handle the non-stationary environment and large arm set inherent to dense mmWave deployments.

\item \textbf{We design a novel trajectory-aligned knowledge (TAK) region for cooperative knowledge transfer.} Unlike conventional K-means regions, the TAK region aligns cluster-based knowledge boundaries with the road direction, better matching the spatial anisotropy of urban mmWave channels. This yields sharper geographic resolution, lower inter-cluster ambiguity, and higher knowledge transfer fidelity for entering vehicles.

\item \textbf{We propose the knowledge inheritance fidelity (KIF) metrics to evaluate cooperative knowledge transfer quality.} KIF quantifies how accurately inherited knowledge reflects a vehicle's actual channel conditions. Unlike throughput-based metrics, KIF provides a principled diagnostic for comparing knowledge region geometries and synchronization intervals independently of overall communication rate.

\item \textbf{We provide the first regret analysis for a CD-based non-stationary bandit under vehicular blockage.} We introduce a blockage-filtered baseline assumption that prevents blockage-induced observations from collapsing the false-alarm bound. We further replace the Bernoulli reward assumption of~\cite{liu2018change} with Hoeffding-based bounds for general bounded rewards, better reflecting 
continuous achievable-rate observations. The proposed algorithm achieves $\mathcal{O}(|\bar{\mathcal{J}}|\,\Upsilon_T \ln T / \Delta_{\min})$ expected regret, matching existing CD-based guarantees while covering the blockage regime.
\end{enumerate}

\section{Related Works}
\subsection{Optimization and Learning-Based Approaches to UAs}
UA represents a fundamental challenge in HetNets, becoming increasingly critical as network density, user mobility, and service diversity continue to grow. In its most general form, where each user connects to exactly one BS at any given time, the UA problem leads to complex integer nonlinear programming formulations that are typically NP-hard~\cite{Mlika2018}. Classical approaches to UA have relied on optimization-based methods such as convex optimization~\cite{rw_tradition1, rw_tradition2}, game theory~\cite{rw_tradition3}, matching theory~\cite{rw_tradition4}, and load balancing techniques~\cite{rw_tradition5}. These methods are designed to maximize network utility, balance loads across cells, and ensure QoS. However, these approaches typically rely on centralized CSI collection, which incurs prohibitive computational complexity and hinders real-time deployment in large-scale HetNets. Moreover, most methods have assumed stationary channel conditions and static optimization solutions, fundamentally limiting their adaptability to the continuous mobility and time-varying dynamics inherent in practical HetNet deployments.

Notably, reinforcement learning (RL) has emerged as a powerful approach for UA problem in mobile networks and HetNets. These problems are inherently non-convex, combinatorial, and highly dynamic due to fluctuating user demands and network topologies. Various RL approaches have been deployed to address the dynamic nature of vehicular networks, including deep Q-learning variants such as DDPG~\cite{ref54} and DDQN~\cite{ref61} for handover overhead reduction and radio link failure avoidance in mmWave scenarios, Actor-Critic algorithms~\cite{ref55} for maximizing average vehicular data rates while maintaining QoS targets, and value-based methods like SARSA~\cite{ref60} for optimal cell selection during high-mobility handovers. While these model-free RL techniques demonstrate performance improvements over traditional handover schemes, they typically require extensive offline training and struggle to adapt efficiently to rapidly changing vehicular environments. This limitation motivates the implementation of online learning frameworks, particularly MAB, which enable real-time decision-making without requiring pre-trained models or complete environmental knowledge, making them especially suitable for the time-varying channel conditions and mobility patterns of vehicular networks.

\subsection{Multi-Armed Bandit for Dynamic UA}
MAB frameworks have become popular tools for tackling UA problems in wireless networks, showing their strength in addressing the exploration-exploitation trade-off. Classical MAB algorithms such as upper bound confidence (UCB) and Thompson Sampling have been applied to network selection and resource allocation due to their simplicity and online adaptability~\cite{rw_mab1}. However, these traditional MAB algorithms assume stationary reward distributions, where the expected reward of each arm remains constant over time. In vehicular mmWave networks, the environment is non-stationary due to vehicle mobility, channel variations, and dynamic blockage. To address this issue, contextual MAB (CMAB) approaches exploit side information that correlates with reward changes. % inherent in dynamic environments where network conditions, user mobility, and channel states are time-varying
Sun et al. propose a MAB-based handover mechanism that leverages received signal strength vectors and UE movement history to reduce unnecessary handovers in ultra-dense mmWave networks~\cite{2021mmWave}. Singh et al.~\cite{NS3} formulate UA as a restless bandit problem and select BS by minimum Whittle index. However, computing Whittle indices incurs significant computational overhead as the number of UEs and base stations grows, limiting its scalability in large-scale networks. He et al. exploit inter-context reward correlations in~\cite{He1} and map vehicle context into a reproducing kernel Hilbert space to capture nonlinear context-reward relationships~\cite{He2}. While richer context information enables more accurate reward estimation, it simultaneously enlarges the context space and makes the context-reward mapping harder to learn, slowing convergence as the agent requires extensive exploration across the expanded space before reliable estimation can be achieved.
%While contextual bandits can address non-stationary rewards by partitioning the context space into hypercubes, high-dimensional context spaces lead to sparse sample distributions within each hypercube. This curse of dimensionality slows learning convergence, as insufficient samples per hypercube limit reward estimation ability without extensive exploration.  
%This algorithm focuses on load balancing through implicit state-based context rather than explicit spatial-temporal features.

Another approach to handle non-stationarity is to explicitly account for changes in the reward distribution over time, which can be broadly categorized into passively and actively adaptive policies. Passively adaptive policies are unaware of changes in the reward distribution but update their decisions based on recent observations to track the best arms. Representative algorithms include Discounted UCB~\cite{dcucb2006}, Sliding-Window UCB~\cite{swucb2008}, and Exp3~\cite{exp3}. Actively adaptive policies implement a change-detection (CD) algorithm to monitor the arms' performance and restart the bandit algorithm once a breakpoint is detected. Hartland et al.~\cite{hartland} adopted the Page-Hinkley test (PHT)~\cite{PHT} to detect breakpoints in UCB policy. Liu et al.~\cite{liu2018change} developed CD-UCB framework using Two-sided cumulative sum (CUSUM) and PHT for change detection. However, unlike persistent breakpoints, dynamic blockage is transient and vehicle-dependent, occurring within very short time periods. A naive implementation of CD algorithms could trigger false alarms, mistakenly identifying temporary blockage as actual breakpoints. This leads to frequent unnecessary restarts of the bandit algorithm, preventing convergence and degrading performance.

\subsection{Fully Distributed and Semi-Distributed UA Learning}
A natural extension of single-agent UA learning is the fully distributed multi-agent setting, where each vehicle operates as an independent agent and learns from local observations without coordination. Sana et al.~\cite{Sana2020} demonstrate that this architecture achieves substantial network-wide throughput gains in dynamic mmWave networks while eliminating inter-agent signaling overhead. The cost is that independently learning agents cannot reuse experience accumulated elsewhere in the network, leaving newly arrived vehicles to learn from scratch. Semi-distributed architectures introduce a central coordination node to address this gap. In the bandit domain, Leng et al.~\cite{Leng2022} show that selective inter-agent information sharing can similarly reduce the impact of biased observations on learning quality. These coordination mechanisms, however, rely on offline pre-training, making them ill-suited to the non-stationary reward distributions induced by vehicular mobility and dynamic blockage.

\subsection{Notation}
We represent scalars and sets by italic (e.g.\ $x$, $X$) and calligraphic letters (e.g.\ $\mathcal{X}$), with $|\mathcal{X}|$ denoting cardinality. Vectors and matrices are represented by lowercase and uppercase boldface letters (e.g.\ $\mathbf{x}$, $\mathbf{X}$). The superscripts $(\cdot)^{*}$ and $(\cdot)^\top$ denote the optimal selection and the transpose operator, $\bar{(\cdot)}$ the empirical mean, and $\mathcal{O}(\cdot)$ the computational complexity.

\section{System Model}
\label{System Model}
We investigate HetNet structure in mmWave vehicular communication where a sub-6 GHz MBS overlays multiple mmWave SBSs. The MBS functions as both an access point and a center unit for learning tasks, while SBSs enable advanced mmWave V2X applications. The rest of this section presents the vehicle mobility model and the channel model. Followed by the optimization problem formulation. 

\subsection{Mobility and Channel Model}
\label{subsec:mobilityandchannelmodel}
Consider a finite time horizon $T \in \mathbb{N}$ with discrete time steps $t = 1, \ldots, T$. At each time step, vehicles arrive in the network following a Poisson distribution. Vehicles update their states at each time period. We consider a set of SBSs denoted as $\mathcal{J}  \triangleq \{1, \ldots, j, \ldots, |\mathcal{J}|\}$, where $|\mathcal{J}|$ represents the total number of SBSs deployed for mmWave vehicular network operation. For each time step, let $\mathcal{V}^t \triangleq \{1, \ldots, k, \ldots, {|\mathcal{V}^t|}\}$ represent the vehicle set with $|\mathcal{V}^t|$ vehicles. Each SBS is equipped with a massive antenna array with $M_j$ elements, enabling serving multiple vehicles. Each vehicle $k$ is equipped with antenna arrays for both sub-6 GHz and mmWave frequency bands. For the sub-6 GHz frequency band, we adopt the widely used Gaussian MIMO channel model. For the mmWave frequency band, where one vehicle is constrained to associate with only one SBS, the channel gain between the SBS $j$ and the vehicle $k$ is given as:
\begin{equation}
y_{k,j}(t) = \mathbf{H}_{k,j}(t)\mathbf{w}_{k,j},
\label{eq.1}
\end{equation}
where $\mathbf{H}_{k,j}^t$ and $\mathbf{w}_{k,j}$ denote the channel matrix and beamforming vector from vehicle $k$ to SBS $j$, respectively. We focus on large-scale fading characteristics, including path loss and shadowing induced by buildings in the urban scenario. Since orthogonal frequency allocation is adopted across SBSs, inter-cell interference is negligible, and only intra-cell interference needs to be considered. Specifically, when vehicle $i$ simultaneously communicates with SBS $j$ at time $t$, it introduces interference to the ongoing transmission toward vehicle $k$, characterized by the interference channel coefficient $\tilde{y}_{i,j}(t)$. The total interference at SBS $j$ while serving vehicle $k$ is therefore aggregated over all vehicles that simultaneously communicate with SBS $j$:
\begin{equation}
I_{k,j}(t) = \sum_{i \in (\mathcal{V}^t \setminus k)} P_{v}\left|\tilde{y}_{i,j}(t)\right|^2\mathcal{I}_{i,j}^t  + N_o W,
\label{eq.3}
\end{equation}
where the $W$ is the bandwidth of SBS $j$, $P_{v}$ is the transmission power of vehicles, $N_o$ is the noise power density and $\mathcal{I}_{i,j}(t) = 1$ indicates $i$ is associating with SBS $j$, otherwise $\mathcal{I}_{i,j}(t) = 0$.

\subsection{Blockage Model}
\label{subsec:blockagemodel}

Dynamic vehicular blockage is detected using the geometry-based method from our previous work~\cite{vtcchi}, where blockages are identified from the positional information of SBSs and vehicles. Detected blockages are treated as complete obstructions. A blocker vehicle obstructs the Line-of-Sight (LOS) link when it lies on the LOS path between an SBS and a target vehicle, and when $40\%$ of the first Fresnel zone is obstructed~\cite{fresnel}. The Fresnel zone radius at the obstruction is:
\begin{equation}
\tilde{r}=\sqrt{\lambda_c\frac{d_{so} d_{vo}}{d_{so}+d_{vo}}},
\label{eq:fresnel_radius}
\end{equation}
where $\lambda_c$ is the carrier wavelength, and $d_{so}$, $d_{vo}$ are the distances from the SBS and vehicle to the obstruction vehicle along the LOS path. Applying knife-edge diffraction theory, blockage occurs when the obstructing vehicle height $z_o$ exceeds the effective Fresnel ellipsoid height:
\begin{equation}
\tilde{z}=\left(z_v - z_{\text{s}}\right)\frac{d_{so}}{d_{so}+d_{vo}} + z_{\text{s}} - 0.6\tilde{r},
\label{eq:effective_height}
\end{equation}
where $z_{\text{s}}$ and $z_v$ are the heights of the SBS and target vehicle. Let $\mathcal{B}^t$ indicate the collection of blocked SBS index sets for vehicle set $\mathcal{V}^t$, defined as:
\begin{equation}
\mathcal{B}^t \triangleq \{\mathbf{b}_1^t, \ldots, \mathbf{b}_k^t, \ldots, \mathbf{b}_{|\mathcal{V}^t|}^t\},
\label{eq:blocked_set}
\end{equation}
where $\mathbf{b}_k^t \in \mathcal{J}$ denotes the vector of blocked SBS indices for vehicle $k$ at time $t$.

\subsection{Optimization Problem}
\label{subsec:optimizationproblem}
According to \eqref{eq.1} and \eqref{eq.3}, the instantaneous transmission rate between vehicle $k$ to SBS $j$:
\begin{equation}
R_{k,j}^{t} = W \log_2 \left(1 + \frac{P_{v} |y_{k,j}(t)|^2}{I_{k,j}(t)}\right),
\label{eq.4}
\end{equation}
At time $t$, vehicle $k$ is associated with one SBS. Let $\boldsymbol{\eta}^{t}$ denote the vector collecting the selected SBS index for each vehicle in $\mathcal{V}^t$. The association between vehicles and SBSs can be defined as:
\begin{equation}
\boldsymbol{\eta}^{t} \triangleq [\eta_{1}^{t}, \ldots,\eta_{k}^{t}, \ldots, \eta_{{|\mathcal{V}^t|}}^{t}].
\label{eq.5}
\end{equation}

The UA optimization problem aims to maximize the sum rate of the vehicular network by finding the optimal association vector:
\begin{equation}
\begin{aligned}
\max_{\boldsymbol{\eta}^t} \quad & r(\boldsymbol{\eta}^{t}) = \sum_{{k} \in \mathcal{V}^t} R_{k, \eta_{k}^t}^{t} \\
\text{s.t.} \quad & \sum_{j=1}^{|\mathcal{J}|} \mathcal{I}_{k,j}^{t} = 1, \quad \forall k \in \mathcal{V}^t.
\end{aligned}
\label{eq.6}
\end{equation}

The constraint in~\eqref{eq.6} enforces that each vehicle associates with exactly one SBS at each time step $t$, while each SBS may serve multiple vehicles simultaneously. The optimization problem presented in~\eqref{eq.6} exhibits NP-hardness characteristics attributed to the combination of non-convex nonlinear constraints and integer variables. While exhaustive search can identify optimal associations with complete CSI knowledge, it becomes impractical in highly dynamic vehicular networks due to the high computational complexity and the accuracy of CSI estimation.

\section{Problem formulation}

In this section, we present the basic concepts of the contextual bandits problem and further formulate the vehicular UA process under a change-detection bandit framework where a piecewise-stationary environment is assumed. 

\subsection{CMAB Components for UA}
In RL-based UA, each vehicle operates as an agent that interacts with the network environment over time steps. At each time step $t$, every vehicle observes its current context and makes a decision to select an SBS for association. This sequential decision-making under uncertainty aligns naturally with the CMAB framework, where an agent repeatedly chooses among a set of arms (here, the available SBSs) to observe only the reward of the selected SBSs at each round. We now formally define the key components of the CMAB-based UA process in mmWave vehicular networks.
% MAB framework~\cite{mabbook2019}
\begin{enumerate}[label=\alph*)]
    \item \textbf{\textit{Arm:}} Each arm corresponds to one of the SBSs available for association, indexed by $j \in \mathcal{J}$.
    %\item \textbf{\textit{Agent:}} At time step $t$, the set of vehicles $\mathcal{V}^t$ act as agents, each making independent association decisions.
    %\item \textbf{\textit{Context:}} The context of vehicle $k$ at $t$ is defined as its Cartesian coordinates $\mathbf{x}_{k}^{t} \in \mathcal{X} \subseteq \mathbb{R}^2$, where $\mathcal{X}$ denotes the context space.
    \item \textbf{\textit{Context:}} The context of vehicle $k$ at $t$ is defined as its Cartesian coordinates $\mathbf{x}_{k}^{t} \subseteq \mathbb{R}^2$.
    \item \textbf{\textit{Action:}} At time step $t$, vehicle $k$ selects an SBS for association based on its current context. This action is represented by the association variable $\eta_{k}^{t} \in \mathcal{J}$.
    \item \textbf{\textit{Policy:}} A policy governs how vehicles map observed contexts to actions during the learning process. 
    \item \textbf{\textit{Reward:}} When vehicle $k$ selects SBS $j$ at time step $t$, it observes an instantaneous reward $R_{k,j}^{t}$, reflecting the immediate link communication quality.
    \item \textbf{\textit{Expected reward:}} The expected reward for vehicle $k$ selecting SBS $j$ at time step $t$ is defined as the expectation of the instantaneous reward, i.e., $\bar{R}_{k,j}^{t} = \mathbb{E}[ R_{k,j}^{t}]$. 
    \item \textbf{\textit{Trials:}} $n_{k,j}^{t}$ denotes the cumulative number of times SBS $j$ has been selected up to time step $t$.
\end{enumerate}

We consider a multi-agent CMAB setting in which multiple vehicles operate as simultaneous learners. Since mmWave propagation characteristics are highly location-dependent, the vehicle's position $\mathbf{x}_{k}^{t}$ is adopted as the context. However, the movement of neighboring vehicles continuously reshapes blockage patterns, introducing temporal variations that positional information alone cannot fully reflect. Therefore, the blocked SBS indices $\mathbf{b}_{k}^{t}$ are also incorporated into the association decisions to account for the dynamic blockage conditions.

%In mmWave vehicular networks, propagation characteristics are highly location-dependent~\cite{Va2017}. Vehicles at different positions experience distinct path loss, reflection, and blockage patterns, resulting in heterogeneous reward distributions across SBSs. A vehicle's kinematic state, such as its position and velocity, therefore serves as a natural proxy for the underlying large-scale channel conditions. That said, the movement of neighboring vehicles continuously reshapes blockage patterns and interference levels, introducing temporal variations that positional information alone cannot capture. The optimal SBS selection may thus differ substantially even among vehicles following similar trajectories. To account for both effects, we adopt the CMAB framework, where the context $\mathbf{x}_{k}^{t}$ jointly encodes the vehicle's kinematic  state and the observed blockage status of neighboring vehicles, allowing association decisions to be conditioned on both the spatial and dynamic aspects of the propagation environment.

\subsection{Upper Confidence Bound Policy}

To navigate the exploration-exploitation trade-off, we adopt the UCB policy in our algorithm design. UCB policy constructs an optimistic estimate of each arm's potential reward by adding a confidence term that reflects estimation uncertainty to the expected reward. Intuitively, an arm that has been pulled infrequently carries a wider confidence interval and is thus assigned a higher UCB value, naturally incentivizing exploration of less-visited SBSs without requiring explicit exploration parameters. As estimation improves through repeated pulls, the confidence term shrinks, shifting the balance progressively toward exploitation. Specifically, the UCB value for vehicle $k$ associated with SBS $j$ at time step $t$is computed as:
\begin{equation}
    u_{k,j}^{t} = \bar{R}_{k,j}^{t-1} + \sqrt{\frac{2 \ln t}{n_{k,j}^{t-1}}}.
\label{eq.7}
\end{equation}

The second term represents the confidence bound, which decays as SBS $j$ is selected more frequently. Vehicle $k$ then associates with the SBS that maximizes this optimistic value:
\begin{equation}
    j^* = \arg\max_{j \in \mathcal{J}}\, u_{k,j}^{t}.
\label{eq.8}
\end{equation}

Algorithm~\ref{alg.ucb} presents the UCB-based UA procedure at time step $t$ under the CMAB framework.

\begin{algorithm}[h]
\caption{CMAB UA using UCB}
\label{alg.ucb}
\SetAlgoLined
\SetInd{0.5em}{0.5em}
\textbf{Input:} $\bar{R}_{k,j}^{t-1}$, $n_{k,j}^{t-1}$ for all $j \in \mathcal{J}$\\[0.5em]
    %Identify vehicle context $x_{k}^{t}$\;
    \For{$j \in \mathcal{J}$}{
        Calculate UCB value $u_{k,j}^{t}$ according to~\eqref{eq.7}\;
    }
    Select SBS $j^*$ using~\eqref{eq.8}\;
    Receive instantaneous reward $R_{k,j^*}^t$\;
    Update number of pulls: $n_{k,j^*}^{t} = n_{k,j^*}^{t-1} + 1$\;
    Update expected reward: $\bar{R}_{k,j^*}^{t} = \bigl(\bar{R}_{k,j^*}^{t-1}\cdot n_{k,j^*}^{t-1} + R_{k,j^*}^{t}\bigr) / n_{k,j^*}^{t}$\;
\textbf{Output:} $j^*$, $n_{k,j^*}^{t}$, $\bar{R}_{k,j^*}^{t}$\\
\end{algorithm}

\subsection{Change-Detection based Bandit Framework}
\label{subsec:CDframework}

Although mmWave channels exhibit short coherence times and are highly sensitive to the surrounding environment, V2I channel studies at mmWave frequencies in urban environments have demonstrated that the non-stationary fading process admits quasi-stationarity regions of finite duration, within which channel statistics remain approximately stable before changing abruptly as the propagation environment evolves~\cite{Rodriguez2023}. This empirical evidence motivates modeling the environment as piecewise-stationary~\cite{liu2018change}, where channel statistics remain constant within each stationary segment and change abruptly at unknown breakpoints, thereby enabling natural integration of CD algorithms into the bandit policy. We formalize this model through the following assumptions and introduce the two-sided CUSUM algorithm as the CD mechanism.

\begin{assumption}[Piecewise Stationarity]
\label{ass:stat}
The reward distribution of SBS $j$ to a vehicle $k$ remains stable within stationary segments and shifts abruptly at unknown breakpoints, where a breakpoint occurs at time $t$ if $\exists j \in \mathcal{J}$ such that $\bar{R}_{k,j}^t \neq \bar{R}_{k,j}^{t+1}$. A breakpoint may be induced by mobility, blockage, or a change in the interference $I_{k,j}(t)$ from other vehicles re-associating. Any two consecutive breakpoints are separated by at least $|\mathcal{J}|P$ time slots for some integer $P$.
\end{assumption}

\begin{assumption}[Detectability]
\label{ass:detect}
There exists a known parameter $e > 0$ such that, whenever the reward of any SBS $j$ shifts between two consecutive segments, the mean change is at least $3e$. $e$ also serves as the sensitivity parameter in the CUSUM statistic in~\eqref{eq:CDalg}.
\end{assumption}

For each SBS $j$, the two-sided CUSUM-CD algorithm~\cite{liu2018change} proactively tracks simultaneous upward and downward shifts in the observed reward sequence. A baseline mean $\bar{\mu}_{k,j} \triangleq \frac{1}{M}\sum_{t=t_0}^{t_0+M-1} R_{k,j}^{t}$ is estimated from the initial $M$ observations collected by vehicle $k$ from SBS $j$, with $(g_{k,j}^{t_0+M})^{\pm} = 0$. The following cumulative statistics are maintained:
\begin{equation}
(g_{k,j}^{t})^{\pm} = \max\left(0,\, (g_{k,j}^{t-1})^{\pm} \pm 
\left(R_{k,j}^t - \bar{\mu}_{k,j} - e\right)\right).
\label{eq:CDalg}
\end{equation}

A breakpoint is declared when $(g_{k,j}^{t})^{\pm} \geq \sigma$, where $\sigma$ is the detection threshold balancing sensitivity to true breakpoints against false alarms. Upon the detected breakpoint, the bandit statistics associated with SBS $j$ are fully reset, as prior reward estimates no longer reflect the current channel conditions.

\section{The Proposed Framework}
\subsection{Framework Overview}
\label{subsec:semiFrame}

The fluctuating mmWave vehicular channel quality caused by vehicle mobility and dynamic blockages violates the stationary reward assumption underlying the MAB framework. The CMAB framework partially addresses this by incorporating contextual information. However, the expanding context space degrades the efficiency of the context-reward mapping process and severely impedes convergence. To address this challenge, this paper proposes algorithms with distinct architectural designs spanning fully distributed and semi-distributed schemes within CMAB, collectively balancing the trade-offs between convergence speed and dynamic network adaptability. As illustrated in Fig.~\ref{fig:system}, vehicles traversing in urban environment seek mmWave connectivity from densely deployed SBSs, while the MBS operates on the sub-6 GHz band to support centralized coordination. Two UA algorithms are proposed within this network, where BAND adopts a fully distributed design, and S-BAND follows a semi-distributed architecture. Under S-BAND, each vehicle independently executes a local bandit-based UA algorithm and periodically uploads its accumulated bandit statistics as local knowledge to the MBS, which aggregates into a cluster-based central knowledge available for inheritance during knowledge transfer process. 

\begin{figure*}[htbp]
    \centering
    \includegraphics[width=2\columnwidth]{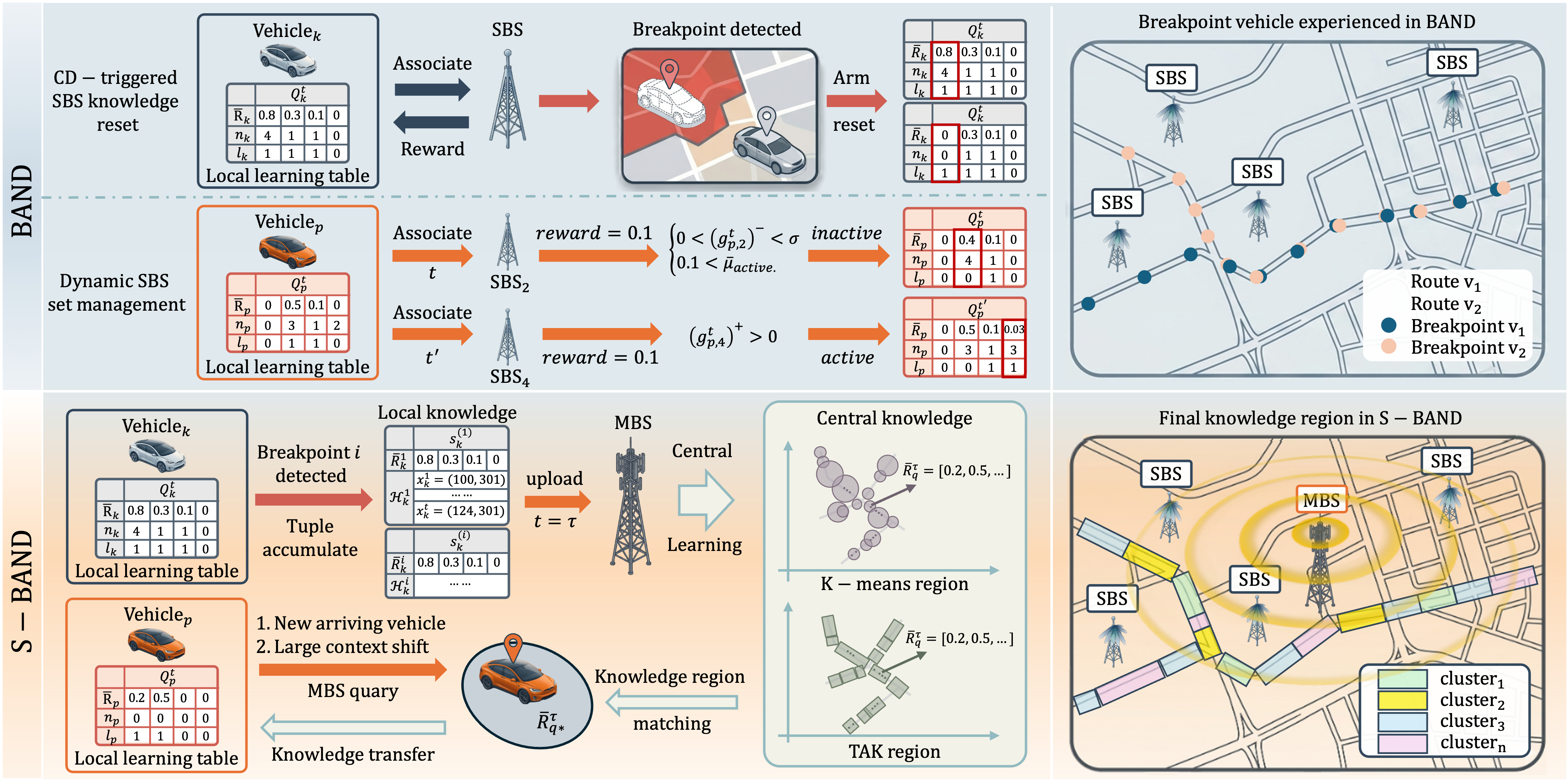}
    \captionsetup{justification=centering}
    \caption{System Overview}
    \label{fig:system}
\end{figure*}

\subsection{Blockage-Aware Non-stationary Dynamic Bandit (BAND)}
\label{subsec:BAND}
In the fully distributed BAND, each vehicle independently executes the UA decision process without any coordination, where the received reward remains subject to inter-vehicle interference as formulated in~(\ref{eq.4}). Rather than incorporating blockage status as an additional context dimension, BAND directly predicts blockage-induced reward degradation, which mitigates false-alarm restarts observed in traditional CD algorithms while relying solely on locally observable information. Beyond blockage handling, BAND introduces a two-stage UCB policy with a dynamic SBSs set management scheme, balancing exploration and exploitation across SBS sets through probability-based selection. Key components of the BAND algorithm are summarized as follows:

\subsubsection{\textbf{Local learning table}}
For each vehicle $k$ at time step $t$, the BAND algorithm maintains a local learning table $\boldsymbol{Q}_k^t \in \mathbb{R}^{3 \times |\mathcal{J}|}$ for all SBSs defined as:
\begin{equation}
\boldsymbol{Q}_k^{t} = \begin{bmatrix} (\bar{\boldsymbol{R}}_k^t)^\top ,\; (\boldsymbol{n}_k^t)^\top ,\; (\boldsymbol{l}_k^t)^\top \end{bmatrix}
\label{eq:learningtable}
\end{equation}

where $\boldsymbol{\bar{R}}_k^t = [\bar{R}_{k,j}^t]_{j=1}^{|\mathcal{J}|}$ denotes the estimated reward vector, $\boldsymbol{n}_k^t = [n_{k,j}^t]_{j=1}^{|\mathcal{J}|}$ denotes the trial vector, and $\boldsymbol{l}_k^t = [l_{k,j}^t]_{j=1}^{|\mathcal{J}|}$ denotes the BS status vector with $l_{k,j}^t \in \{0,1\}$, where $l_{k,j}^t = 1$ denotes an active SBS with accumulated observations, and $l_{k,j}^t = 0$ denotes an inactive SBS that yields low received power or has yet to be explored. The $\boldsymbol{Q}_k^{t}$ is initialized upon vehicle entry and updated whenever a new reward is received.

\subsubsection{\textbf{Two-stage UCB policy}}
Vehicle $k$ carries out the SBS association following a two-stage strategy:
\begin{itemize}
    \item \textbf{Stage 1 (SBS set selection):} With probability $\epsilon$, the vehicle explores the inactive SBS set. Otherwise, it exploits the active SBS set with probability $(1-\epsilon)$:
    \begin{equation}
    \mathcal{S}_k^t = \begin{cases}
    \{j \in \mathcal{J} \setminus \mathbf{b}_k^t : l_{k,j}^{t-1} = 1\}, & \text{w.p. } (1-\epsilon), \\
    \{j \in \mathcal{J} \setminus \mathbf{b}_k^t : l_{k,j}^{t-1} = 0\}, & \text{w.p. } \epsilon,
    \end{cases}
    \label{eq:levelselection}
    \end{equation}
    where $\mathcal{S}_k^t$ is the chosen candidate SBS set, with blocked SBSs indices $\mathbf{b}_k^t$ defined in~\eqref{eq:blocked_set} excluded due to predicted severe LOS obstruction.
    \item \textbf{Stage 2 (UCB policy):} Within the candidate set $\mathcal{S}_k^t$, the vehicle applies the UCB policy (Algorithm~\ref{alg.ucb}) restricted to SBS $j \in \mathcal{S}_k^t$. This two-stage strategy prioritizes exploitation among active SBSs that are likely to provide high-quality connections, while periodically exploring inactive SBSs to detect potential channel quality improvements.
\end{itemize}
\subsubsection{\textbf{Dynamic SBS set management}}
At the core of the BAND algorithm, the SBS set management scheme dynamically categorizes SBSs into active and inactive sets. It continuously adapts the SBS categorization to time-varying channel conditions, ensuring that the two-stage UCB policy operates on an up-to-date SBS categorization that reflects the current network environment. This management integrates the two-sided CUSUM-CD algorithm (Section~\ref{subsec:CDframework}) with a threshold-based SBS set update mechanism, and operates through three phases:

\begin{itemize}
    \item \textbf{Initialization:} When a new vehicle $k$ enters the network or experiences a significant position shift ($\Delta{\text{pos}_k} > d_{reset}$), $\boldsymbol{\bar{R}}_k^t$ and $\boldsymbol{n}_k^t$ are initialized as $\mathbf{0}$, and only SBSs whose physical distance to vehicle $k$ is within a predefined threshold $d_{init}$ are labeled as active. This distance-based initialization reduces the exploration by excluding geographically implausible candidates from the outset.

    \item \textbf{Threshold-based categorization:} The BS set status $l_{k,j}^t$ is updated based on cumulative drift $(g_{k,j}^{t})^{\pm}$ of each SBS. An active SBS $j$ is demoted to inactive if $(g_{k,j}^{t})^-$ is detected and its most recent reward falls below the mean estimated reward $\bar{\mu}_{\text{active}}$ across all currently active SBSs. Conversely, an inactive BS $j$ is promoted to active immediately if $(g_{k,j}^{t})^+$ is detected to increase the exploration probability of potentially promising SBSs.

    \item \textbf{CD-triggered reset:} When the two-sided CUSUM-CD algorithm detects a significant reward distribution shift for SBS $j$ (i.e., $(g_{k,j}^{t})^{\pm} \geq \sigma$), the knowledge of SBS $j$ is reset, enabling exploration in the new channel conditions.
\end{itemize}

These phases ensure a timely adaptation to dynamic network conditions, enabling BAND to focus on promising SBSs while avoiding low-quality associations. Notably, SBSs experiencing temporary blockage retain their previous level classification, as they are already excluded from the candidate set by the blockage-aware level selection in~\eqref{eq:levelselection}, yielding no reward observations during blockage periods. Since vehicular blockages are typically transient, this design prevents premature level re-categorization that would degrade long-term performance. The complete BAND framework is presented in Algorithm~\ref{alg:BAND}.

\begin{algorithm}[h]
\caption{BAND}
\label{alg:BAND}
\SetAlgoLined
\SetInd{0.5em}{0.5em}
\textbf{Input:} $t$, $\mathcal{V}^{t}$, $\mathcal{J}$, $\boldsymbol{Q}^0$\;
    \For{each vehicle $k \in \mathcal{V}^{t}$}{
        \If{$k \notin \mathcal{V}^{t-1}$\textbf{or} $\Delta pos_k > d_{reset}$}{
            Initialize $\boldsymbol{Q}_k^{init}$\;
        }
        Blocked SBSs indices $\mathbf{b}_k^t$ in (\ref{eq:blocked_set})\;
        SBS set $\mathcal{S}_k^t$ selection according to~\eqref{eq:levelselection}\;
        SBS selection within $\mathcal{S}_k^t$: $R_{k,j^*}^{t} \gets \text{Algorithm~\ref{alg.ucb}}$\;
        Calculate $(g_{k,j^*}^t)^{\pm}$ according to~\eqref{eq:CDalg}\;
        \scalebox{1}{$\boldsymbol{Q}_k^t \left\{\begin{array}{ll}
        \boldsymbol{Q}_{k,j^*}^{init}, & \text{if } (g_{k,j^*}^{t})^{\pm} > \sigma \\
        l_{k,j^*}^t = 0, & \text{if } (g_{k,j^*}^{t})^{-} > 0,\ R_{k,j^*}^t < \bar{\mu}_{\text{active}} \\
        l_{k,j^*}^t = 1, & \text{if } (g_{k,j^*}^{t})^{+} > 0
        \end{array}\right.$}\;
    }
\textbf{Output:} $j*$, $R_{k,j^*}^{t}$, $\boldsymbol{Q}^t_k$\;
\end{algorithm}

\subsection{Semi-distributed BAND}
\label{subsec:SBAND}

\begin{figure*}[htbp]
    \centering
    \subfloat[]{\includegraphics[width=0.24\textwidth]{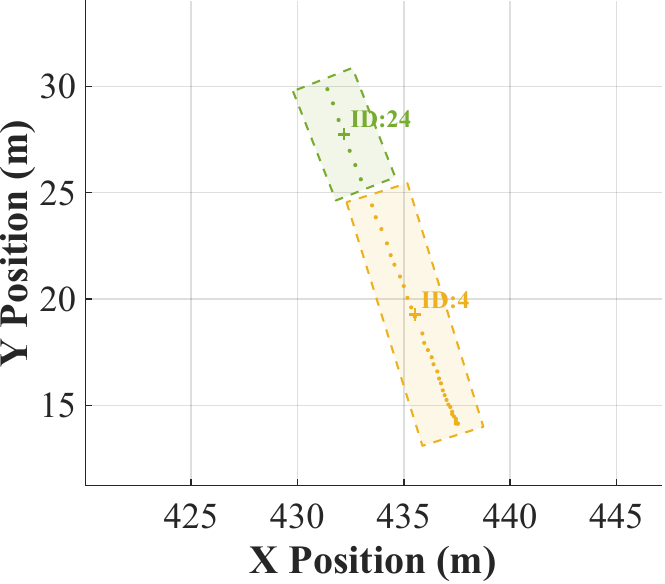}}
    \subfloat[]{\includegraphics[width=0.24\textwidth]{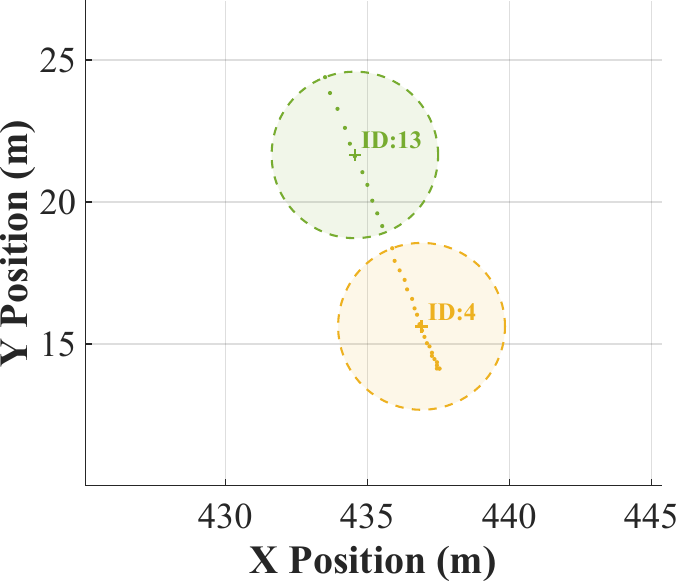}}
    \subfloat[]{\includegraphics[width=0.24\textwidth]{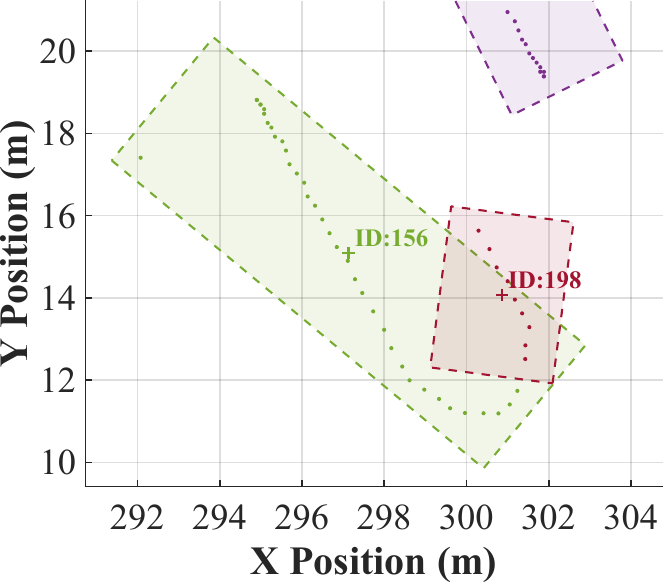}}
    \subfloat[]{\includegraphics[width=0.24\textwidth]{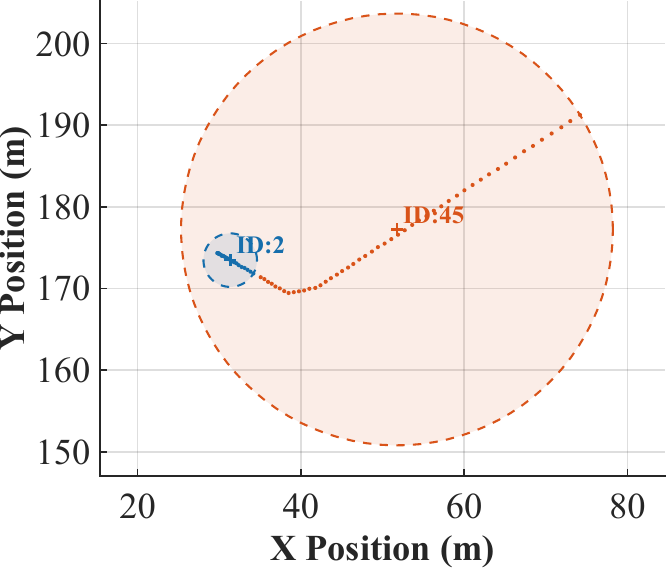}}
    
    \subfloat[]{\includegraphics[width=0.24\textwidth]{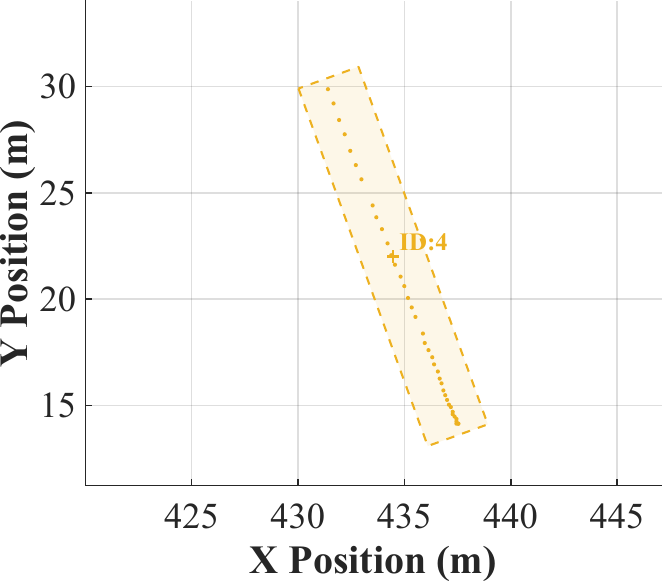}}
    \subfloat[]{\includegraphics[width=0.24\textwidth]{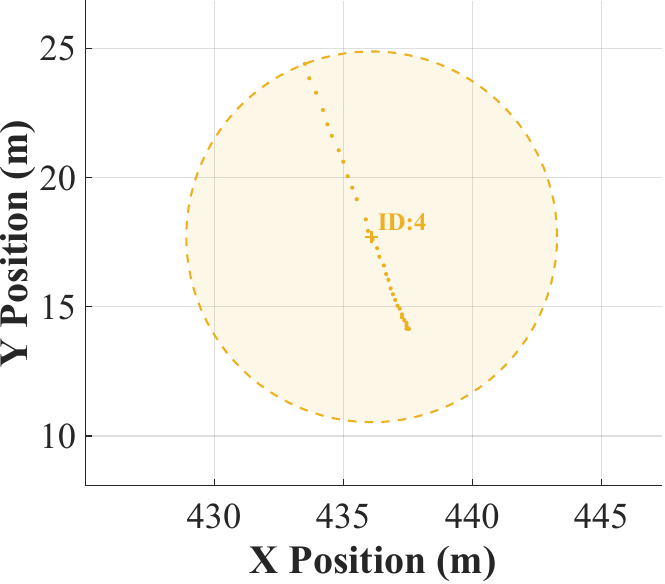}}
    \subfloat[]{\includegraphics[width=0.24\textwidth]{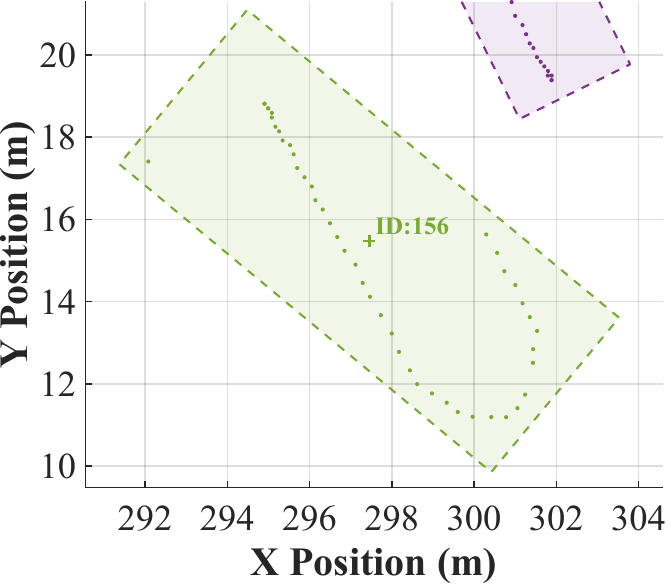}}
    \subfloat[]{\includegraphics[width=0.24\textwidth]{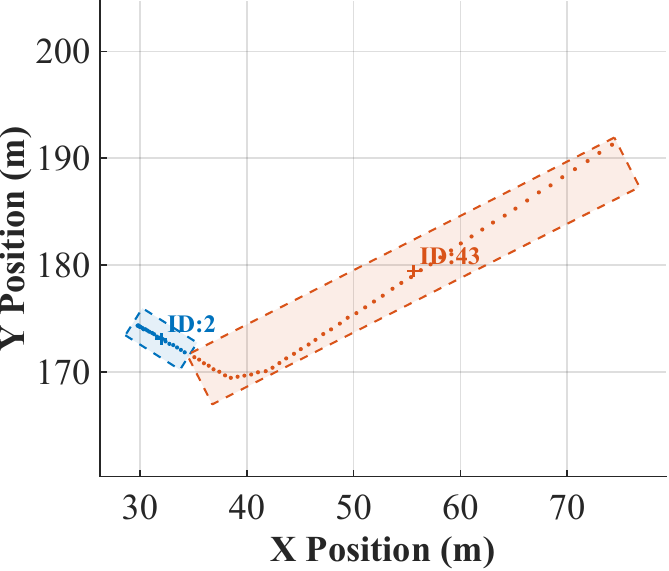}}
    
    \caption{Knowledge region merging in synchronization period with different region configurations. (a)-(e) and (b)-(f) illustrate TAK and K-means region merging, respectively; (c)-(g) shows how TAK regions handle trajectory curves; (d) and (h) demonstrate the overlapping problem that K-means regions often encounter, while TAK regions exhibit better region resolution.}
    \label{fig:context_merging}
\end{figure*}

While the fully distributed BAND algorithm offers lightweight learning with strong adaptability to non-stationary mmWave environments, each vehicle learns independently and discards all accumulated knowledge upon departure. This raises a natural question: if locally learned knowledge could be shared across vehicles, could it accelerate the learning process for newly arriving vehicles traversing similar road segments? The semi-distributed BAND (S-BAND) extension addresses this by leveraging the HetNet architecture to disseminate distributed knowledge via the MBS. The S-BAND consists of three stages.

\subsubsection{\textbf{Local learning stage}}
During the local learning stage, each vehicle $k$ maintains its local learning table $\boldsymbol{Q}_k^{t}$ and records its trajectory as a sequence of historical positions since the previous breakpoint. Upon the $i$-th breakpoint detection, the vehicle forms a tuple $\boldsymbol{s}_k^{(i)} = (\mathcal{H}_k^{(i)},\, \boldsymbol{\bar{R}}_k^{(i)})$, where $\mathcal{H}_k^{(i)}$ is the sequence of positions collected since the previous breakpoint and $\boldsymbol{\bar{R}}_k^{(i)}$ is the estimated reward vector extracted from $\boldsymbol{Q}_k^{t}$ over the same interval. The trajectory record is cleared after each tuple formation, ensuring that $\mathcal{H}_k^{(i)}$ serves as the geographical mapping of $\boldsymbol{\bar{R}}_k^{(i)}$ within the same tuple. Within each synchronization period $\tau$, multiple breakpoints may be detected, and tuples $\boldsymbol{s}_k^{(i)}$ accumulate into a tuple set as local knowledge that vehicle $k$ uploads 
to the MBS at the end of period $\tau$.

\subsubsection{\textbf{Central knowledge}}
The MBS maintains a cluster-based central knowledge $\mathcal{C}^{\tau} = \{\mathbf{c}_1^{\tau}, \ldots,\mathbf{c}_q^{\tau}, \dots, \mathbf{c}_Q^{\tau}\}$ with each cluster containing its cluster statistics:
    \begin{equation}
        \mathbf{c}_q^{\tau} = [\bar{\mathcal{A}_q^{\tau}},\, \bar{\boldsymbol{R}}_q^{\tau}]^\top,
        \label{eq:sBAND_cluster}
    \end{equation}
where $\bar{\boldsymbol{R}}_q^{\tau}$ denotes an estimated reward vector computed by averaging the individual reward vectors from over all tuples within cluster $\mathbf{c}_q^{\tau}$. $\bar{\mathcal{A}}_q^{\tau}$ represents the knowledge region of cluster $\mathbf{c}_q^{\tau}$, formed by trajectory records from all tuples in the same cluster. We calculate the knowledge region $\bar{\mathcal{A}}_q^{\tau}$ using two geometric shapes: 
\begin{itemize}
    \item \textbf{K-means region:} the minimum enclosing circle centered at the mean of all trajectory points in the cluster, with radius equal to the maximum distance from the centroid to any assigned trajectory point.
    \item \textbf{Trajectory-aligned knowledge (TAK) region:} a trajectory-aligned bounding box of minimum area enclosing the trajectory points, expanded by a fixed margin $\delta$ along each axis.
\end{itemize}

We form geographic knowledge regions, motivated by the strong spatial correlation of mmWave propagation characteristics, which makes geographic proximity a reliable surrogate for reward similarity. However, K-means region representations are ill-suited for vehicular scenarios, as road-aligned vehicle trajectories cause neighboring K-means regions to overlap excessively, degrading geographic resolution. Furthermore, building blockage in urban mmWave propagation produces spatially truncated signal patterns aligned with the road axis, which K-means regions fail to capture. Motivated by these observations, and as illustrated in Fig.~\ref{fig:context_merging} (d) and (h), we adopt a TAK region in addition to the K-means region formulation, thereby reducing inter-cluster overlap and better representing the underlying geometry of the reward distribution.

\subsubsection{\textbf{Central learning stage}}
Within each synchronization period $\tau$, the MBS updates $\mathcal{C}^{\tau}$ by merging newly uploaded tuples into $\mathcal{C}^{\tau-1}$. For each incoming tuple $\boldsymbol{s}_k^{(i)}$, the MBS determines whether it should be merged into an existing cluster $\boldsymbol{c}_q^{\tau-1}$ by checking if the tuples' knowledge region formed from $\mathcal{H}_k^{(i)}$ either spatially overlaps with the cluster region $\bar{\mathcal{A}}_q^{\tau-1}$, or if the two region centroids are within a distance threshold $d_{cen}$. If so, the MBS further verifies whether the sets of SBSs yielding non-zero rewards in $\boldsymbol{s}_k^{(i)}$ and the matched cluster are identical. If both conditions are met, the tuple is assigned to that cluster; otherwise, $\boldsymbol{s}_k^{(i)}$ is registered as a new cluster. Once all tuples are assigned, the knowledge region $\bar{\mathcal{A}}_q^{\tau}$ is recomputed from the combined trajectory positions, and the reward vector $\bar{\boldsymbol{R}}_q^{\tau}$ is updated by averaging across all assigned tuples.

\subsubsection{\textbf{Knowledge Inheritance}}
When a vehicle $k$ enters the network or its position shifts are larger than $d_{reset}$, it queries the MBS if there's any central knowledge for inheritance. The MBS evaluates spatial containment against all clusters in $\mathcal{C}^{\tau}$ based on the vehicle's current position $\mathbf{x}_k^t$: 
\begin{equation}
    \mathcal{C}_k^{\text{match}} = \left\{\boldsymbol{\mathbf{c}}_q^{\tau} \in \mathcal{C}^{\tau} : 
    \mathbf{x}_k^t \in \mathcal{A}_q^{\tau}\right\},
    \label{eq:inherit_match}
\end{equation}
If multiple clusters match, the vehicle inherits from the cluster $\boldsymbol{c}_q^{\tau}$ whose knowledge region $\bar{\mathcal{A}}_q^{\tau}$ has its geometric centroid closest to $\mathbf{x}_k^t$:

\begin{equation}
    \boldsymbol{c}_{q^*}^\tau = \arg\min_{q \,:\, \boldsymbol{c}_q^{\tau} \in \mathcal{C}_k^{\text{match}}} \left\| \bar{\boldsymbol{a}}_q^{\tau} -  \mathbf{x}_k^t \right\|_2,
    \label{cluster_optimal}
\end{equation}
where $\bar{\boldsymbol{a}}_q^{\tau}$ denotes the geometric centroid of $\bar{\mathcal{A}}_{q}^{\tau}$. The inherited reward vector $\bar{\boldsymbol{R}}_{q^*}^{\tau}$ initializes the local learning table as $\bar{\boldsymbol{R}}_k^t = \bar{\boldsymbol{R}}_{q^*}^{\tau}$, $\boldsymbol{n}_k^t = \boldsymbol{0}$, and $\boldsymbol{l}_k^t$ is initialized such that $l_{k,j}^t = 1$ for all SBSs $j$ with $\bar{R}_{q^*,j}^{\tau} > 0$, and $l_{k,j}^t = 0$ otherwise. To validate the inherited knowledge, the vehicle compares the inherited reward $\bar{R}_{q^*,j}^{\tau}$ against observed rewards during the first association steps. If $\left|\bar{R}_{q^*,j}^{\tau} - R_{k,j}^t\right| > \Delta$, the inherited knowledge is discarded and the vehicle reverts to local learning from scratch.

\begin{algorithm}[h]
\caption{S-BAND}
\label{alg:sBAND}
\SetAlgoLined
\SetInd{0.5em}{0.5em}
\textbf{Input:} $T$, $\mathcal{C}^{0}$, $\mathcal{V}^{t}$, $\mathcal{J}$, $\tau$, $\eta$\\[0.5em]
\For{$t = 1$ \KwTo $T$}{
    \If{$t \bmod \tau = 0$}{
        Vehicles upload accumulated tuples\;
        $\mathcal{C}^{\tau} \gets \text{Central learning in MBS}$\;
    }
    \For{$k \in \mathcal{V}^{t}$}{
        \If{$k \notin \mathcal{V}^{t-1}$ \textbf{or} $\Delta pos_k > d_{reset}$}{
            \tcp{Knowledge inheritance query}
            \eIf{$\mathcal{C}_k^{\text{match}} \neq \emptyset$}{
                Identify $\boldsymbol{c}_{q^*}^\tau$ according to~\eqref{cluster_optimal}\; 
            }{
                Initiate local learning table $\boldsymbol{Q}_k^{init}$\;
            }
        }
        $ R_{k,j^*}^t \gets \text{Execute Algorithm~\ref{alg:BAND}}$\; 
        \If{$\mathcal{C}_k^{\text{match}} \neq \emptyset$ \textbf{and} $\bigl|\bar{R}_{q^*,j}^{\tau} - R_{k,j^*}^t\bigr| > \Delta$}{
            Initiate $\boldsymbol{Q}_k^{init}$\;
        }
        Update local knowledge $\boldsymbol{Q}_k^{t}$\;
        \If{Breakpoint $i$ is detected for $k$}{
           $\boldsymbol{s}^{(i)}_k \gets [\mathcal{H}_k^{(i)},\, \boldsymbol{\bar{R}}_k^{(i)}]$ \tcp*{Tuple formation}}
    }
}
\textbf{Output:} Associated SBS indices $\boldsymbol{\eta}$\\
\end{algorithm}
\subsection{Regret Analysis}
\label{sec:regret}
We bound the regret of \textsc{BAND} under a piecewise-stationary 
environment, building on Assumptions~\ref{ass:stat} and~\ref{ass:detect} 
introduced in Section~\ref{subsec:CDframework}. We further impose the following assumptions tailored to the structure of \textsc{BAND} to ensure the CUSUM statistic tracks the true channel distribution rather than transient blockage.
\begin{assumption}[Blockage-Filtered Baseline]
\label{ass:blk}
The CUSUM baseline $\bar{\mu}_{k,j}$ is estimated using only samples 
collected when SBS $j$ is not blocked.
\end{assumption}

We consider a representative vehicle $k$ and analyze its \emph{per-vehicle marginal regret}, measured against the per-slot optimal SBS under the realized interference process, where the reward $R^t_{k,j}$ is the achievable rate conditioned on the contemporaneous associations of all other vehicles, i.e., the inter-vehicle interference $I_{k,j}(t)$ is part of the observed reward. We normalize rewards to $[0,1]$ and let $\Upsilon_T$ denote the number of breakpoints in $[1,T]$, where in the context of \textsc{BAND}, any change in the other vehicles' behavior that shifts the reward distribution seen by vehicle $k$ is detected as a breakpoint by the CUSUM module in the same way as a mobility- or blockage-induced shift. We do not claim a system-wide guarantee, such as a social optimum or a Nash equilibrium, over the joint association of all vehicles. The convergence of the coupled multi-agent dynamics is beyond the scope of this regret analysis and is left to future work. The empirical convergence of all vehicles under this coupling is nonetheless demonstrated in Section~\ref{result3}, Fig.~\ref{fig:all_regret}.

Our analysis uses the result of \cite{liu2018change} at two levels. 
The regret decomposition of \cite[Theorem~1]{liu2018change} requires only Assumption~\ref{ass:stat} and treats the mean detection delay $\E[D]$ and the expected number of false alarms $\E[F]$ as inputs, which we invoke directly. To bound $\E[D]$ and $\E[F]$ themselves, however, \cite[Theorem~2]{liu2018change} additionally assumes Bernoulli rewards, which does not hold here since our reward is the achievable rate $R^t_{k,j}$, bounded in $[0,1]$ but otherwise general. We therefore bound the change-detection performance via Hoeffding's inequality for bounded variables rather than the Bernoulli-specific argument of \cite{liu2018change}. Under Assumptions~\ref{ass:stat}--\ref{ass:blk}, the negative CUSUM drift required for the false-alarm bound is ensured by Assumption~\ref{ass:blk}, and there exist constants $C_1,C_2>0$ depending only on $e$ and $M$ such that
\begin{equation}
\E[D]\le C_2(\sigma+1),
\qquad
\E[F]\le 2T\,e^{-2(\sigma-Me)^2/M},
\label{eq:cd}
\end{equation}
where $\sigma$ is the detection threshold. This distinguishes our analysis from \cite{liu2018change} in two respects. We drop the Bernoulli assumption and instead handle general bounded rewards, and the blockage-filtered baseline of Assumption~\ref{ass:blk}, which is absent in \cite{liu2018change}, keeps the drift negative under transient blockage.

\begin{theorem}[Regret of BAND]
\label{thm:regret}
Under Assumptions~\ref{ass:stat}--\ref{ass:blk}, the expected cumulative regret of BAND satisfies
\begin{equation}
\E[\mathcal{R}_k(T)]\le
\bigl(\Upsilon_T+\E[F]\bigr)\frac{4|\bar{\J}|\ln T}{\Delta_{\min}}
+\Upsilon_T\,\E[D]
+\mathcal{O}\!\bigl(|\bar{\J}|\bigr),
\label{eq:regret}
\end{equation}
where $|\bar{\J}|$ is the average active-SBS-set size and $\Delta_{\min}$ the smallest reward gap over all segments.
\end{theorem}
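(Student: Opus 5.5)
The plan is to follow the regret decomposition of \cite[Theorem~1]{liu2018change}. It needs only Assumption~\ref{ass:stat}, and takes the mean detection delay $\E[D]$ and the expected number of false alarms $\E[F]$ as black-box inputs. We then specialize each term to the structure of BAND.

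\textbf{Partition of the horizon.} Split $[1,T]$ into the $\Upsilon_T+1$ true stationary segments. Within each segment, further split at the restart instants triggered by CUSUM, whether correct detections or false alarms. Between two consecutive restarts inside one segment the rewards are i.i.d.\ and bounded in $[0,1]$. The environment seen by BAND in such an interval is therefore a stationary bandit over the candidate set $\mathcal{S}_k^t$.

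\textbf{Regret per restart interval.} On each interval we apply the standard UCB analysis with the index~\eqref{eq.7}. By Hoeffding's inequality, a suboptimal SBS $j$ with gap $\Delta_j$ is pulled at most $8\ln T/\Delta_j^2+\mathcal{O}(1)$ times in expectation. Summing $\Delta_j$ times this count over the active set and bounding $\Delta_j\ge\Delta_{\min}$ gives at most $c|\mathcal{J}_{\mathrm{act}}|\ln T/\Delta_{\min}+\mathcal{O}(|\mathcal{J}_{\mathrm{act}}|)$ per interval. We will need to tighten the constant to $4$, or absorb the slack into the definition of the gap.

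Two points are specific to BAND. First, Stage~1 restricts UCB to the active set with probability $1-\epsilon$. Pulls from the inactive set, and the initial $M$ baseline samples per arm, are charged to the $\mathcal{O}(|\bar{\J}|)$ term. For this we treat $\epsilon$ as small or decaying, or regard inactive exploration as bounded per segment; this is where the average active-set size $|\bar{\J}|$ replaces $|\mathcal{J}|$. Second, blocked SBSs are excluded via $\mathbf{b}_k^t$ and yield no samples. By Assumption~\ref{ass:blk} they do not contaminate the baseline, so the stationary analysis on the unblocked arms is unaffected.

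\textbf{Counting intervals and delays.} The number of restart intervals is at most $\Upsilon_T+F+1$, which gives the first term of~\eqref{eq:regret} after taking expectations. After each true breakpoint, the algorithm may play a stale arm until detection, paying at most $1$ per slot. This contributes $\Upsilon_T\E[D]$. Assumption~\ref{ass:stat} guarantees each segment has length at least $|\mathcal{J}|P$, and we choose $P$ so that detection typically completes within the segment. Otherwise the missed-detection event is charged to the next segment, as in \cite{liu2018change}.

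\textbf{Main obstacle.} The hardest step is justifying the black-box use of $\E[D]$ and $\E[F]$ from~\eqref{eq:cd} under general bounded rewards and partial blockage. For $\E[F]$, we will argue as follows. Under the pre-change mean, and with the baseline built from unblocked samples only (Assumption~\ref{ass:blk}), the CUSUM increments $\pm(R-\bar\mu-e)$ have negative drift of order $-e$. Hoeffding's inequality then bounds the probability that the statistic crosses $\sigma$ from each restart. A union bound over at most $T$ time slots and the two sides yields $2Te^{-2(\sigma-Me)^2/M}$. For $\E[D]$, Assumption~\ref{ass:detect} gives a post-change drift of at least $e$ (a change of $3e$, minus $e$ of baseline error and $e$ of slack). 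Wald's identity then gives $\E[D]\le C_2(\sigma+1)$.

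The delicate part is that the baseline error must be at most $e$ with high probability from only $M$ samples. We will handle this by conditioning on a good-baseline event whose failure probability is absorbed into the $\mathcal{O}(|\bar{\J}|)$ term. Substituting these bounds into the decomposition yields~\eqref{eq:regret}.
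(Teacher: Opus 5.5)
Your route is the paper's route. The paper's proof also takes the decomposition of \cite[Theorem~1]{liu2018change} as given. It charges UCB exploration over the active set once per phase, with $\Upsilon_T+\E[F]$ phases, and at most $\E[D]$ unit-regret slots to each true breakpoint. Per-phase constants and the $M$ baseline samples go into $\mathcal{O}(|\bar{\J}|)$.

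What you call the main obstacle is not part of this theorem. $\E[D]$ and $\E[F]$ appear only symbolically in \eqref{eq:regret}, so the paper never bounds them inside the proof. \eqref{eq:cd} is used only afterwards, to derive the $\mathcal{O}(|\bar{\J}|\Upsilon_T\ln T/\Delta_{\min})$ order, so your CUSUM/Hoeffding paragraph can be dropped. A small point: your count of $\Upsilon_T+F+1$ intervals adds a full extra phase, which is not $\mathcal{O}(|\bar{\J}|)$. Matching \eqref{eq:regret} requires counting $t=1$ as a breakpoint, a convention the paper's count $\Upsilon_T+\E[F]$ tacitly adopts.

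The two steps you leave open are exactly where the paper's short proof is itself loose, and both of your fallbacks need correcting.

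\textbf{The constant.} The paper reads $4\ln T/\Delta_{\min}^2$ directly off \cite[Theorem~1]{auer2002finite}. As you note, for the index \eqref{eq.7} that theorem gives $8\ln T/\Delta_j^2+1+\pi^2/3$ pulls. Absorbing the slack into the gap is not available, since $\Delta_{\min}$ is fixed by the statement. What works is the refined UCB1 analysis:
\begin{itemize}
\item By Hoeffding and a union bound over $s\le t$, the optimal arm's index falls below its mean at time $t$ with probability at most $t^{-3}$, which is summable.
\item Bound $\sum_s\Pr\bigl(\hat\mu_{j,s}+\sqrt{2\ln T/s}\ge\mu_j+\Delta_j\bigr)$ directly, instead of splitting $\Delta_j$ in half. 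Here $\hat\mu_{j,s}$ is the empirical mean after $s$ pulls and $\mu_j$ the true mean.
\end{itemize}
Together these give $\bigl(2\ln T+\mathcal{O}(\sqrt{\ln T})\bigr)/\Delta_j^2+\mathcal{O}(1)$ pulls per phase, hence the constant $4$ for all but very small $T$.

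\textbf{Stage~1 exploration.} Treating $\epsilon$ as merely small does not suffice. Any fixed $\epsilon>0$ (the paper uses $0.1$) produces $\Theta(\epsilon T)$ inactive-set pulls while the unblocked inactive set is nonempty. Each such pull is suboptimal whenever the best SBS is active, and these pulls cannot be charged to $\mathcal{O}(|\bar{\J}|)$. You need the decaying or bounded-per-phase version, which is a hypothesis beyond the statement. The paper's proof does not supply it either; it simply asserts that exploration runs only over the active set.
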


\begin{proof}
Following the decomposition of \cite[Theorem~1]{liu2018change}, the regret has three sources. \emph{(i) Exploration.} Within each stationary phase, the UCB index \cite[Theorem~1]{auer2002finite} selects a suboptimal SBS at most $4\ln T/\Delta_{\min}^2$ times, and the number of phases equals the CUSUM resets $\Upsilon_T+\E[F]$, giving the first term. \emph{(ii) Delay.} Each of the $\Upsilon_T$ breakpoints incurs at most $\E[D]$ slots of unit regret, giving the second term. \emph{(iii) Residual.} Per-phase constants and the $M$ samples used to re-estimate each baseline give $\mathcal{O}(|\bar{\mathcal{J}}|)$. Since exploration runs only over the active set, $|\bar{\mathcal{J}}|$ replaces $|\mathcal{J}|$. 
\end{proof}

Assumption~\ref{ass:blk} is essential. Without it, a blocked fraction $p_b$ of near-zero samples shifts the CUSUM drift from $-e$ to $-e+p_b\bar{R}_{k,j}$. Once this exceeds $e$, the drift turns positive, the false-alarm bound \eqref{eq:cd} collapses, and the first term of \eqref{eq:regret} grows uncontrolled. Choosing $\sigma=Me+\sqrt{(M/2)\ln(T/\Upsilon_T)}$ yields $\E[F]=\mathcal{O}(\Upsilon_T)$ and $\E[\mathcal{R}_k(T)]=\mathcal{O}(|\bar{\mathcal{J}}|\,\Upsilon_T\ln T/\Delta_{\min})$, matching the order of \cite{liu2018change} and approaching the $\Omega(\sqrt{T})$ lower bound of \cite{swucb2008} up to log factors, with $|\bar{\mathcal{J}}|$ replacing $|\mathcal{J}|$.

\section{Simulation Result}
\label{Result}

We evaluated the performance of the proposed algorithms in a realistic scenario. The simulation environment is an urban mmWave vehicular network scenario spanning an area of $550 \times 540$ m. The urban road topology and building infrastructure are derived from OpenStreetMap~\cite{osm2017} data for the Shibuya district in Tokyo, Japan. To ensure realistic evaluation conditions, the simulation incorporates a dense mmWave BS deployment distributed throughout the region, combined with authentic vehicular traffic patterns generated using SUMO~\cite{SUMO2018}. Notably, the BS placement partially adopts real-world deployments sourced from OpenCelliD~\cite{opencellid}, supplemented by artificially placed BSs along major roads to account for future Roadside Unit deployments. The simulation map and BS locations are illustrated in Fig.~\ref{fig:BS_location}. The composition of simulated vehicle types follows the specifications in 3GPP TR 37.885, while the proportion of trucks is adjusted to emulate different dynamic blockage rates. Channel characteristics are modeled using the Clustered Delay Line (CDL) model in conjunction with ray tracing in 3GPP TR 38.901, enabling accurate representation of signal propagation affected by static building obstructions. The complete network and traffic simulation parameters are summarized in TABLE~\ref{tab:sim_params}, which serves as the default configuration for the results presented in~\ref{Result}.
\begin{figure}[htbp]
    \centering
    \includegraphics[width=0.9\columnwidth]{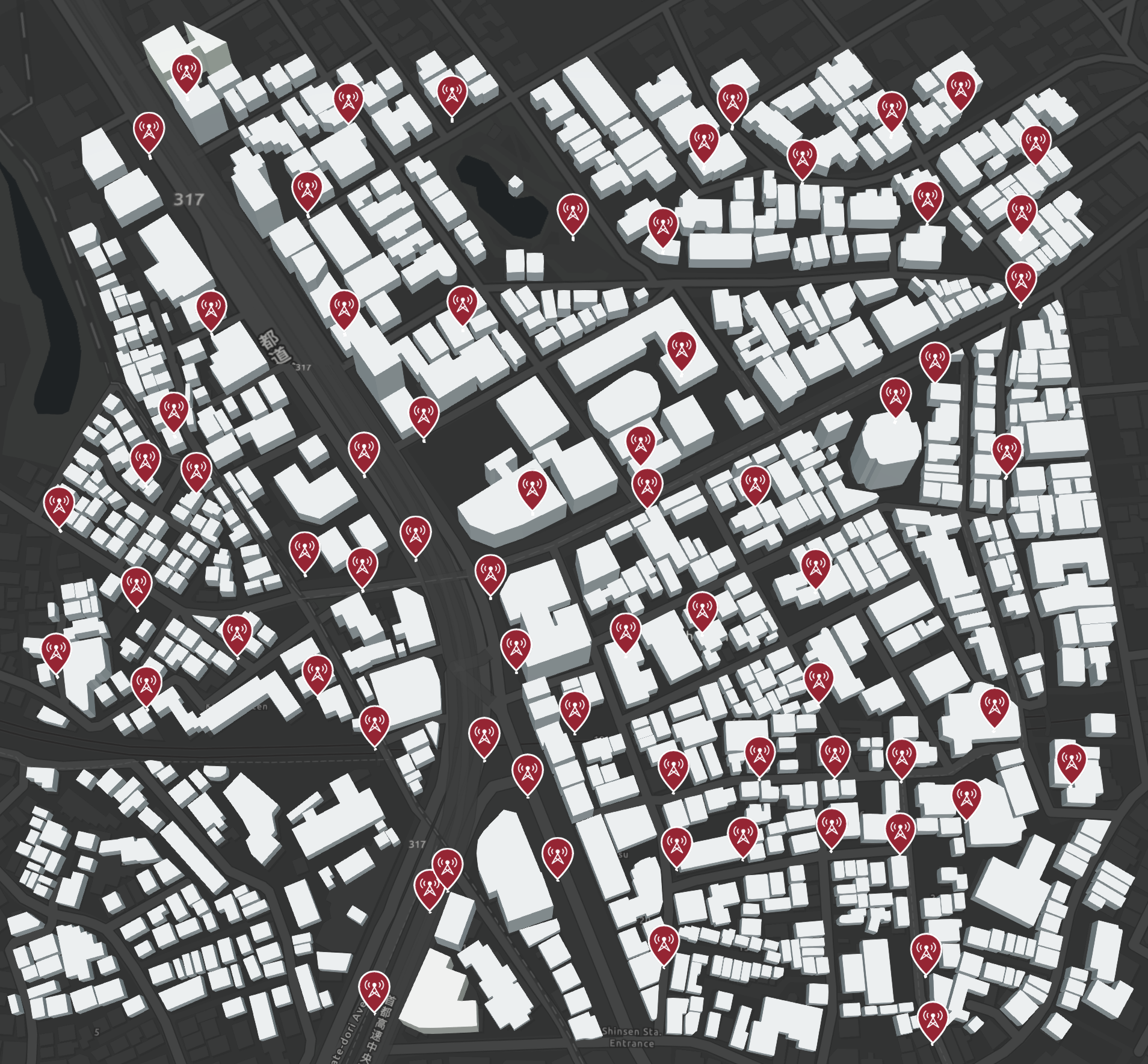}
    \captionsetup{justification=centering}
    \caption{Simulation environment: road topology and BS deployment in the Shibuya district, Tokyo.}
    \label{fig:BS_location}
\end{figure}

\subsection{Change Detection Threshold Analysis}
\label{result1}
The proposed \texttt{BAND} algorithm resets the reward estimation of an SBS upon detected breakpoints, governed by the sensitivity and detection thresholds $(\sigma, e)$ as defined in (\ref{eq:CDalg}). To investigate the impact of $(\sigma, e)$ on algorithm performance, we conduct a grid search over $\sigma \in \{0.1, 0.3, 0.5, 0.7, 0.9\}$ and $e \in \{0.05, 0.15, 0.25, 0.35, 0.45\}$, yielding 25 hyperparameter combinations, the results of which are presented in Fig.~\ref{fig:level_heat}.

\begin{table}[h]
\centering
\caption{Simulation Parameters and Settings}
\label{tab:sim_params}
\begin{tabular}{@{}lcc@{}}
\toprule
\textbf{Category} & \textbf{Parameter} & \textbf{Value} \\
\midrule
\multirow{7}{*}{\makecell{Scenario\\Setup}}
& Number of BSs & 69 \\
& Height of BSs & 5 m \\
& Simulation area & $550 \times 540$ m \\
& Blockage rate & $30\%$ \\
& \multirow{3}{*}{\makecell{Vehicle\\Dimensions}} & $5 \times 2 \times 0.75$ m \\
& & $5 \times 2 \times 1.6$ m \\
& & $13 \times 2.6 \times 3$ m \\
\midrule
\multirow{6}{*}{\makecell{Channel\\Parameters}}
& Carrier frequency & 28 GHz \\
& Bandwidth & 50 MHz \\
& Transmit power & 30 dBm \\
& BS antenna size & $4 \times 4$ \\
& Vehicle antenna size & $2 \times 2$ \\
& Noise power spectral density & $-174$ dBm/Hz \\
\bottomrule
\end{tabular}
\end{table}

\begin{figure}[htbp]
    \centering
    \subfloat[Average Communication Rate]{\includegraphics[width=0.49\columnwidth]{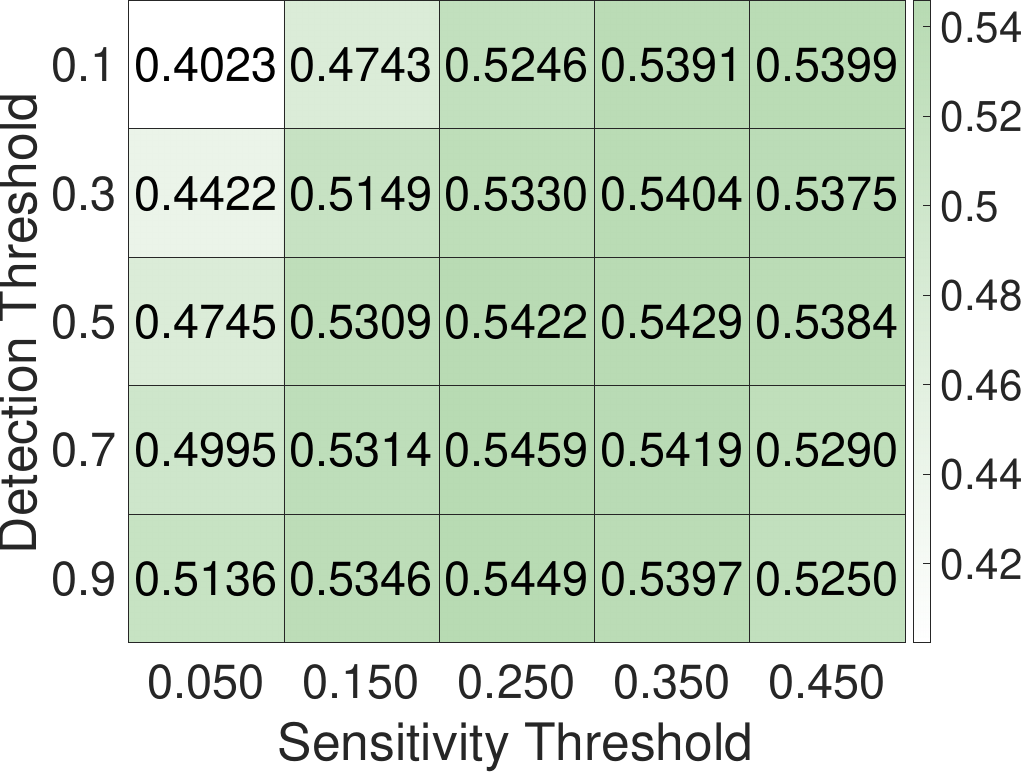}}
    \hfill
    \subfloat[Cumulative Regret]{\includegraphics[width=0.49\columnwidth]{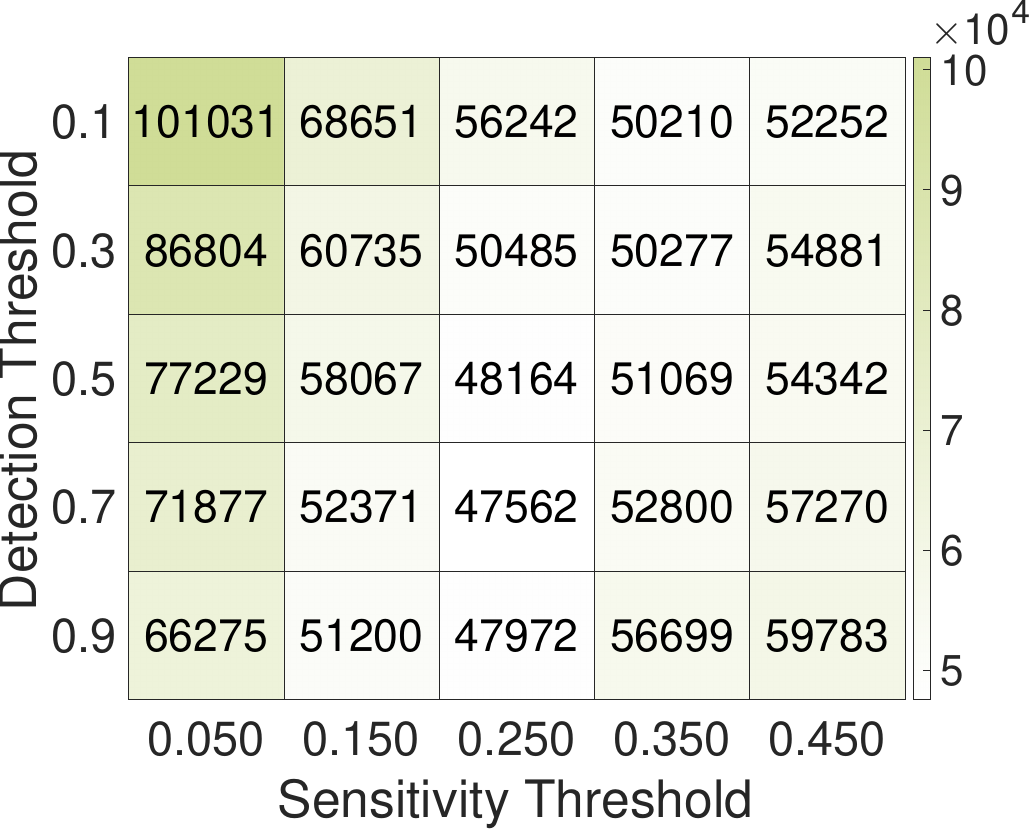}}
    \captionsetup{justification=centering}
    \caption{BAND performance with variation of $\sigma$ and $e$.}
    \label{fig:level_heat}
\end{figure}

As shown in Fig.~\ref{fig:level_heat}, the network average communication rate shows an optimal threshold selection region with $\sigma \in [0.5, 0.7]$ and $e \in [0.25, 0.35]$. The accumulative regret gives the same pattern and shows a trend that performance degrades noticeably at small $\sigma$ and small $e$ combinations, particularly at $(\sigma, e) = (0.1, 0.05)$ where the highest regret of approximately $1.01 \times 10^5$ is recorded. This joint observation suggests that a sensitivity margin of $e \in [0.25, 0.35]$ tolerates moderate signal fluctuations before accumulating evidence toward a breakpoint, effectively filtering transient channel noise while remaining responsive to genuine distributional shifts. A choice of $\sigma$ that is approximately twice the sensitivity threshold $e$ further prevents premature breakpoint declarations by requiring more accumulated evidence before triggering a reset, thereby avoiding false alarms that would degrade the learning efficiency.

\subsection{Knowledge Inheritance Fidelity analysis for S-BAND}
\label{result2}
\texttt{S-BAND} extends \texttt{BAND} with periodic centralized coordination at the MBS executed at synchronization intervals $\tau$, which governs a trade-off between premature aggregation of the local knowledge and staleness of the central knowledge. The influence of the two types of knowledge regions is also investigated. As established in~\ref{result1}, we adopt the best-performing hyperparameter combination $(\sigma, e) = (0.7, 0.25)$ for \texttt{S-BAND}. To assess the quality of knowledge inheritance, we introduce the Knowledge Inheritance Fidelity (KIF) metric, evaluated under two criteria. An inheritance is considered correct under each criterion:
\begin{itemize}
    \item \textbf{Action Fidelity:} the set of top-$3$ SBSs ranked by estimated rewards in the inherited central knowledge coincides with the set of top-$3$ SBSs ranked by the communication rates obtained from the ray-tracing simulation at the time of inheritance.
    \item \textbf{Estimation Fidelity:} the SBS yielding the highest estimated reward in the inherited knowledge coincides with the SBS providing the highest communication rate in the ray-tracing simulation at the time of inheritance, with an estimation error within $\Delta$.
\end{itemize}
Action Fidelity reflects how well the central knowledge characterizes the overall SBS behaviors within a certain region, whereas Estimation Fidelity measures the accuracy of the central knowledge in predicting the performance of the best SBS. The fidelity under each criterion is computed as the percentage of correct inheritance:
\begin{equation}
    \text{Fidelity} = \frac{{\text{Number of Correct Inheritance}}}{{\text{Number of Total Inheritance}}} \times 100\%.
\end{equation}

\begin{figure}[htbp]
    \centering
    \includegraphics[width=0.95\columnwidth]{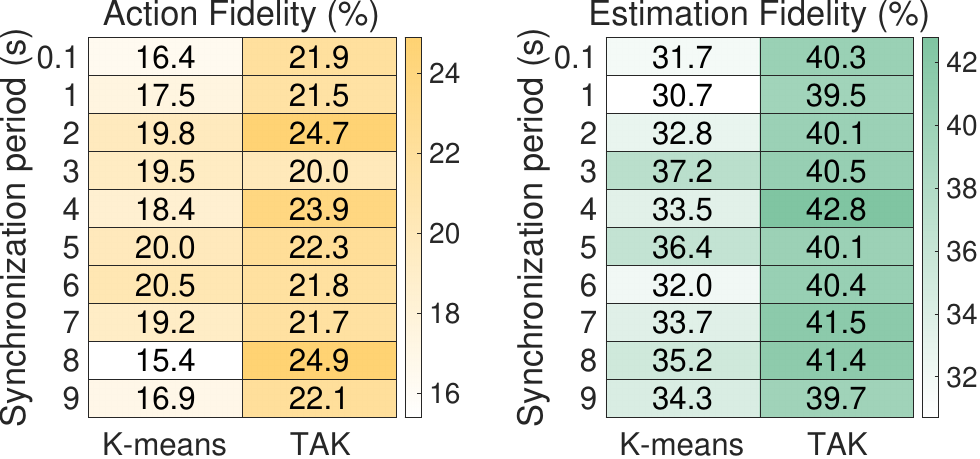}
    \captionsetup{justification=centering}
    \caption{KIF under different knowledge regions and $\tau$.}
    \label{fig:sBAND_fidelity_heat}
\end{figure}

Fig.~\ref{fig:sBAND_fidelity_heat} summarizes the KIF under both criteria across knowledge region types and synchronization intervals $\tau$. Notably, Estimation Fidelity consistently exceeds Action Fidelity, suggesting that while the central knowledge captures the performance of the top-performing SBS with reasonable accuracy, its ability to characterize the broader set of SBS behaviors within the inheritance region remains limited. Moreover, the TAK region outperforms the K-means region under both criteria. This confirms that its trajectory-aligned geometry better captures the spatial correlation of mmWave channel conditions, reducing overlap between adjacent knowledge regions and mitigating ambiguity in inheritance assignment. With respect to $\tau$, both criteria peak around $\tau = 3$--$5\,\text{s}$, reflecting the optimal balance between knowledge freshness and sufficient reward accumulation per synchronization cycle.
%The relatively low Action Fidelity values indicate that relying solely on geographic proximity to determine inheritance eligibility leaves room for improvement in matching inherited knowledge to the vehicle's actual environment. 

\begin{figure}[htbp]
    \centering
    \subfloat[Average Communication Rate]{\includegraphics[width=0.9\columnwidth]{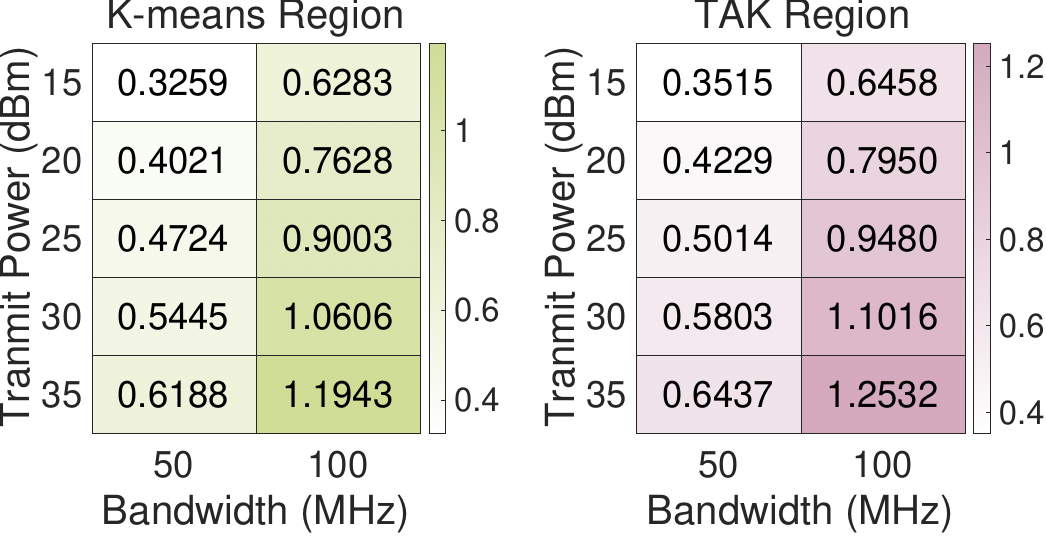}}\\
    \subfloat[Cumulative Regret]{\includegraphics[width=0.9\columnwidth]{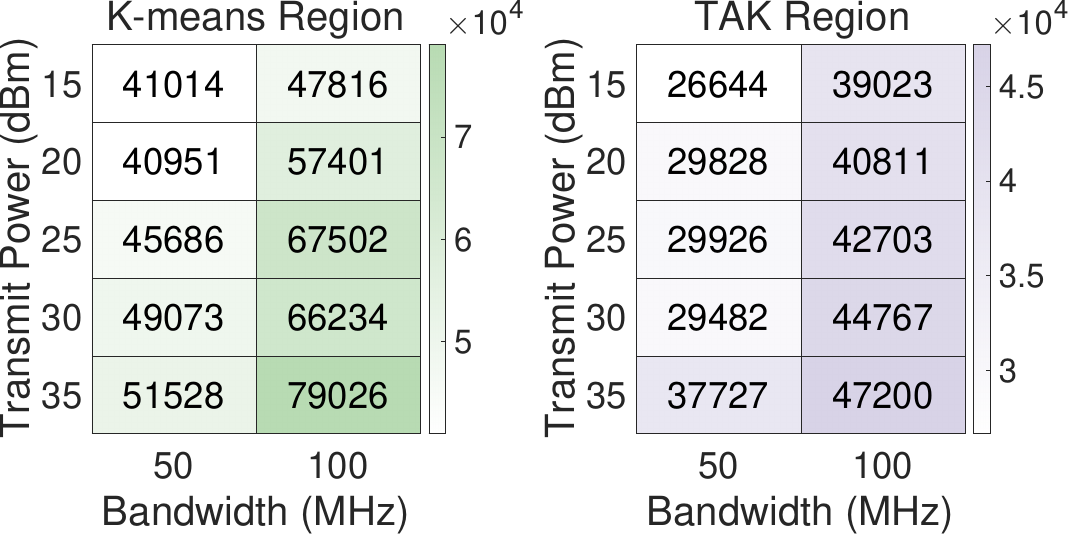}}
    \captionsetup{justification=centering}
    \caption{S-BAND performance across transmit power, bandwidth, and knowledge region at $\tau = 3\,\text{s}$.}
    \label{fig:sBAND_pwi30}
\end{figure}

Taking synchronization period $\tau = 3\,\text{s}$, Fig.~\ref{fig:sBAND_pwi30} evaluates the communication rate and cumulative regret of \texttt{S-BAND} across a range of transmit powers ($15$-$35\,\text{dBm}$) and channel bandwidths ($50$ and $100\,\text{MHz}$) for both region types. The results indicate that the TAK region achieves consistently higher communication rates and lower regret than the K-means region, confirming the advantage of trajectory-aligned spatial clustering for knowledge inheritance. Both region types benefit from increased transmit power and bandwidth, with diminishing returns at higher power levels.

\begin{figure}[htbp]
    \centering
    \includegraphics[width=0.75\columnwidth]{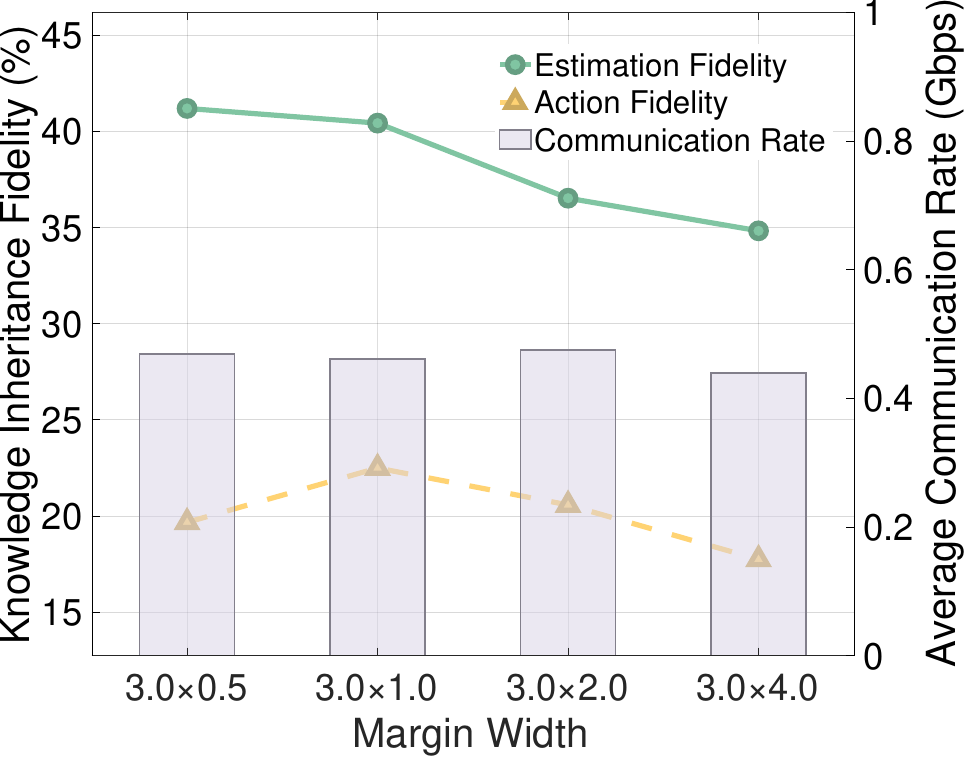}
    \captionsetup{justification=centering}
    \caption{Influence of margin $\delta$ on S-BAND performance.}
    \label{fig:sBAND_bufferL}
\end{figure}
Finally, Fig.~\ref{fig:sBAND_bufferL} investigates the effect of the fixed margin $\delta$ in TAK region construction, where a larger $\delta$ expands the bounding area of historical trajectory points, broadening the inheritance region and increasing the risk of including irrelevant channel observations. Both KIF criteria and average communication rate are evaluated with $\delta$ varied from one-half to four times the lane width of $3\,\text{m}$. Action Fidelity peaks when $\delta$ equals the lane width, and both criteria degrade as $\delta$ becomes excessively wide. The average communication rate, however, remains insensitive to $\delta$ across all tested configurations, as averaging over all vehicles obscures the localized impact of knowledge inheritance quality. 

\subsection{Learning Analysis and Blockage Effect}
\label{result3}
To comprehensively evaluate the proposed framework, the evaluated algorithms span fully centralized, distributed, and semi-distributed learning paradigms under varying blockage rates. We compare our proposed algorithms against three baseline approaches:
1) \texttt{C-UCB}: a fully centralized CMAB framework employing a UCB policy, which partitions the simulation region into geometrically defined hypercubes and assumes stationary rewards within each hypercube;
2) \texttt{minDis}: a non-learning heuristic that always associates with the nearest BS;
3) \texttt{SNR\textsubscript{thresh}}: a non-learning baseline inspired by the 3GPP A3 event-triggered handover mechanism in 3GPP TS 38.331, where a handover is initiated when the serving BS signal quality falls below a predefined threshold relative to the historical maximum received power, and the target BS is selected as the one providing the maximum reference signal received power among neighboring BSs.
The parameter configuration is summarized in Table~\ref{tab:algo_params}.

\begin{table}[h]
\centering
\caption{Algorithm Parameters}
\label{tab:algo_params}
\begin{tabular}{@{}llc@{}}
\toprule
\textbf{Algorithm} & \textbf{Parameter} & \textbf{Value} \\
\midrule
\multirow{1}{*}{C-UCB}
& Hypercube size in distance & $10 \times 10~m$ \\
\midrule
\multirow{3}{*}{BAND}
& SBS set balancing $\epsilon$ & $0.1$ \\
& CD thresholds $(\sigma, e)$ & $(0.7, 0.25)$ \\
& Distance thresholds $(d_{init}, d_{reset})$ & $(200~m, 20~m)$ \\
\midrule
\multirow{4}{*}{S-BAND}
& Accuracy tolerance $\Delta$ & $0.2$ \\
& Centroid distance threshold $d_{cen}$ & $30~m$ \\
& Mimimun Buffer width $w$ & $3~\text{m}$ \\
& Synchronization interval $\tau$ & $3~\text{s}$ \\
& Knowledge Region type & TAK region \\
%\midrule
%\multirow{1}{*}{Others}
%& UCB exploration coefficient $c$ & $1$ \\
\bottomrule
\end{tabular}
\end{table}

\begin{figure}[htbp]
    \centering
    \includegraphics[width=0.8\columnwidth]{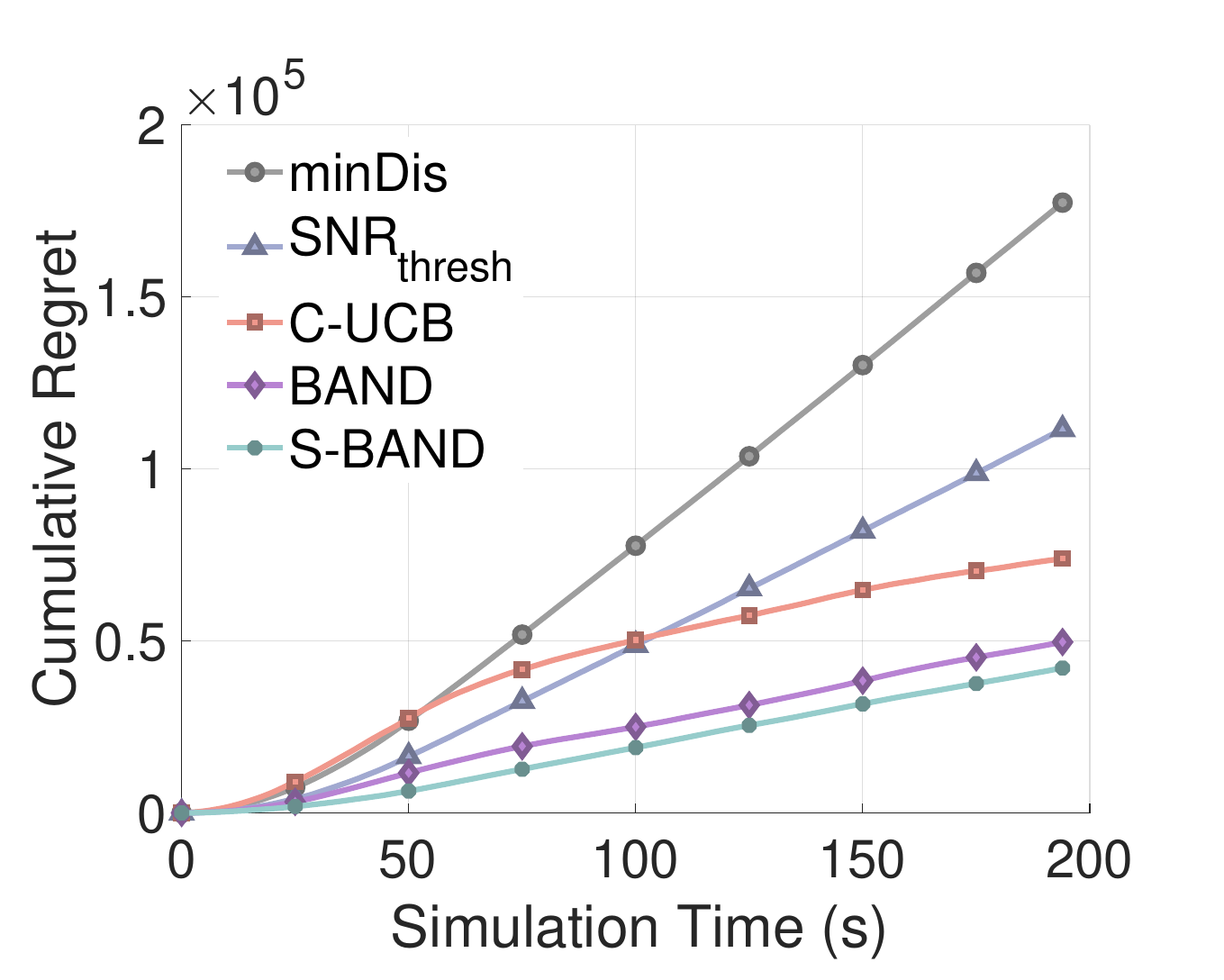}
    \captionsetup{justification=centering}
    \caption{Regret performances across all algorithms}
    \label{fig:all_regret}
\end{figure}

Fig.~\ref{fig:all_regret} compares the cumulative regret across learning time steps. The non-learning \texttt{minDis} exhibits linearly increasing regret, confirming that distance-based heuristics fail to adapt to highly dynamic mmWave channel conditions. Among the proposed bandit algorithms, \texttt{BAND} operates as a fully distributed scheme without any central coordination, yet already achieves 34.9\% regret reduction over \texttt{C-UCB}. The \texttt{S-BAND} further improves upon this, achieving 59.4\% regret reduction over \texttt{C-UCB}. All proposed algorithms converge faster than \texttt{C-UCB}, whose fixed hypercube partitioning leads to slow convergence due to sparse samples distributed across pre-defined context regions. Notably, both \texttt{BAND} and \texttt{S-BAND} operate at a coarser spatial granularity of 20\,m, yet still outperform \texttt{C-UCB}, as explicit context-reward shift tracking via local position displacement $\Delta pos_k$ compensates for the reduced partitioning resolution.

\begin{figure}[htbp]
    \centering
    \includegraphics[width=\columnwidth]{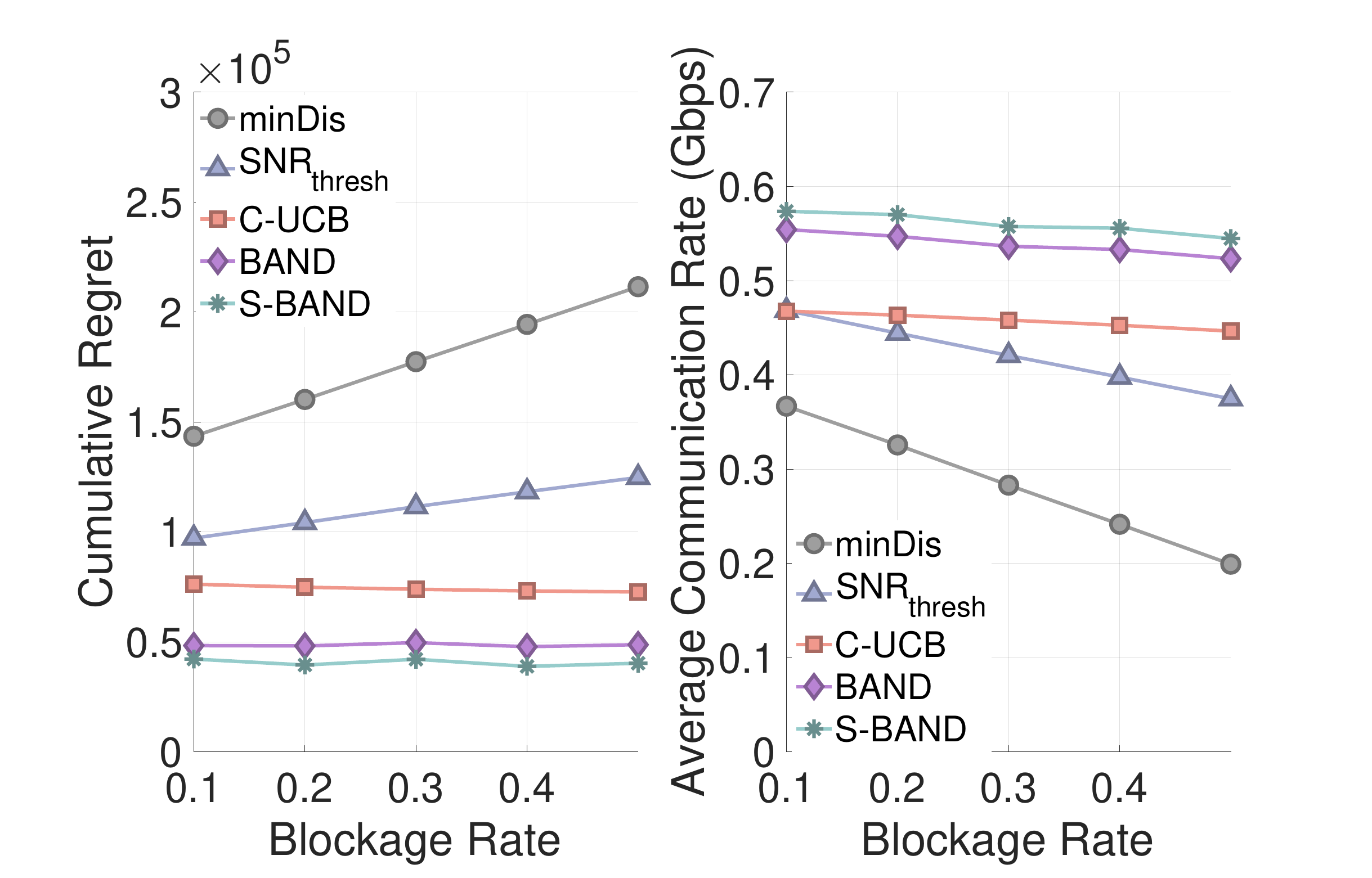}
    \captionsetup{justification=centering}
    \caption{Algorithm performance at different blockage rates}
    \label{fig:all_blockage}
\end{figure}

Fig.~\ref{fig:all_blockage} illustrates the impact of blockage rate on algorithm performance. The proposed algorithms maintain consistently lower cumulative regret and higher average communication rate across varying blockage rates from 10\% to 50\%, compared to the baseline algorithms, demonstrating the effective contribution of blockage-predictive filtering during the decision-making process. Notably, \texttt{SNR\_thresh} achieves relatively close performance to \texttt{C-UCB} when the network blockage rate is low. The consistently poor performance of \texttt{minDis} further confirms that the nearest SBS can deviate significantly from the best-performing one, particularly under dynamic blockages introduced by surrounding vehicles.

\begin{figure}[htbp]
    \centering
    \includegraphics[width=0.75\columnwidth]{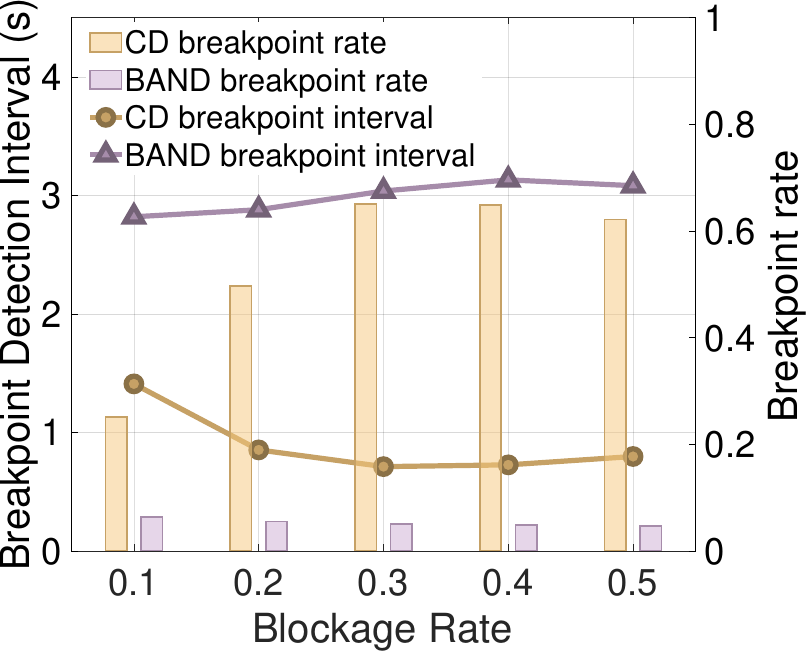}
    \captionsetup{justification=centering}
    \caption{Breakpoint detection rate and interval under different blockage rates.}
    \label{fig:breakpoint}
\end{figure}

Fig.~\ref{fig:breakpoint} evaluates breakpoint detection under varying blockage rates. Naive application of conventional CD to mmWave vehicular networks yields a substantially elevated breakpoint rate that grows with blockage intensity, as transient blockage-induced rate drops are misidentified as true reward shifts. This triggers frequent unnecessary resets, reflected in progressively shorter detection intervals that disrupt learning continuity. In contrast, \texttt{BAND} maintains a consistently low and stable breakpoint rate across all blockage conditions, demonstrating effective false alarm suppression. By explicitly distinguishing blockage-induced degradation from genuine context-reward mapping shifts, \texttt{BAND} sustains a stable reset interval, ensuring resets are effective to true reward shifts.

We give a communication overhead analysis by excluding the association execution cost, as it is identical across all approaches. \texttt{BAND} requires no central communication, as all decisions rely solely on local observations. \texttt{C-UCB} incurs one communication round per vehicle per hypercube boundary crossing, growing with vehicle speed and partitioning granularity. \texttt{S-BAND} communicates only at synchronization boundaries, incurring $\mathcal{O}(|\mathcal{V}^t| \cdot T/\tau)$ communication rounds in total, offering controllable overhead via $\tau$.

\section{Conclusion}
This paper has proposed a fully distributed and semi-distributed blockage-aware bandit framework for UA in mmWave vehicular HetNets. By explicitly differentiating transient blockage-induced degradation from genuine reward distribution shifts, the proposed blockage-aware CD mechanism has effectively suppressed false alarms under varying blockage conditions. The two-stage UCB policy with dynamic SBS set management has enabled efficient UA over densely deployed mmWave BS, requiring neither centralized CSI gathering nor offline training overhead. The S-BAND has accelerated algorithm convergence via MBS-assisted knowledge sharing, while the proposed trajectory-aligned knowledge region has better captured the geometric spatial correlation of mmWave channels, reducing inter-cluster ambiguity and improving inheritance reliability. Numerical results have demonstrated significant regret reduction over all benchmarks with robust performance sustained across varying blockage conditions. Future work will investigate adaptive synchronization intervals that dynamically adjust $\tau$ based on local channel variation rates, and joint optimization of user association and beam management to further improve spectral efficiency in dense mmWave deployments.

\FloatBarrier
\bibliographystyle{IEEEtran}
\bibliography{ref}   
\vspace{11pt}
\end{document}